\documentclass[11pt,a4paper]{article}
\usepackage[margin=1in]{geometry}
\usepackage[T1]{fontenc}
\usepackage{lmodern}
\usepackage{titlesec}
\titleformat{\section}{\large\bfseries\raggedright}{\thesection}{1em}{}
\titleformat{\subsection}{\normalsize\bfseries\raggedright}{\thesubsection}{1em}{}
\usepackage{amssymb,amsthm,booktabs}
\usepackage[authoryear,round]{natbib}
\newcounter{Appendix}
\renewcommand{\theAppendix}{\Roman{Appendix}}
\newenvironment{Appendix}{\refstepcounter{Appendix}\section*{Appendix \theAppendix}}{}

\usepackage{mathtools}
\usepackage{needspace}
\usepackage{placeins}
\usepackage{tikz}
\usetikzlibrary{arrows.meta,positioning,calc}
\usepackage{xcolor}
\usepackage{enumitem}
\usepackage{array}
\usepackage{tabularx}
\usepackage{microtype}
\usepackage{url}
\usepackage{graphicx}
\usepackage{hyperref}

\graphicspath{{figures/}}

\hypersetup{colorlinks=true,linkcolor=blue!50!black,citecolor=blue!50!black,
            urlcolor=blue!50!black}

\definecolor{layerG}{HTML}{0072B2}
\definecolor{layerP}{HTML}{E69F00}
\definecolor{layerF}{HTML}{009E73}
\definecolor{layerA}{HTML}{CC79A7}

\newcommand{\Gen}{\mathsf{G}}
\newcommand{\Pred}{\mathsf{P}}
\newcommand{\Rep}{\mathsf{F}}
\newcommand{\Act}{\mathsf{A}}
\newcommand{\cC}{\mathcal{C}}
\newcommand{\cM}{\mathcal{M}}

\newcommand{\cP}{\mathcal{P}}

\newcommand{\cX}{\mathcal{X}}
\newcommand{\bA}{\boldsymbol{A}}
\newcommand{\bB}{\boldsymbol{B}}

\newcommand{\bK}{\boldsymbol{K}}
\providecommand{\bM}{}
\renewcommand{\bM}{\boldsymbol{M}}
\newcommand{\bS}{\boldsymbol{S}}
\newcommand{\bW}{\boldsymbol{W}}
\newcommand{\bSigma}{\boldsymbol{\Sigma}}
\newcommand{\bb}{\boldsymbol{b}}
\newcommand{\bd}{\boldsymbol{d}}
\newcommand{\ba}{\boldsymbol{a}}
\newcommand{\bmu}{\boldsymbol{\mu}}
\newcommand{\bnu}{\boldsymbol{\nu}}
\newcommand{\bsigma}{\boldsymbol{\sigma}}
\newcommand{\blambda}{\boldsymbol{\lambda}}
\newcommand{\bU}{\boldsymbol{U}}
\newcommand{\bu}{\boldsymbol{u}}
\newcommand{\bone}{\boldsymbol{1}}
\providecommand{\bY}{}
\renewcommand{\bY}{\boldsymbol{Y}}
\providecommand{\by}{}
\renewcommand{\by}{\boldsymbol{y}}
\newcommand{\Gfield}{\Gamma}
\newcommand{\tgt}{\tau}
\newcommand{\R}{\mathbb{R}}
\providecommand{\E}{}
\renewcommand{\E}{\mathbb{E}}
\newcommand{\Haus}{\mathcal{H}}
\newcommand{\KL}{\operatorname{KL}}
\newcommand{\tr}{\operatorname{tr}}
\newcommand{\ran}{\operatorname{ran}}
\newcommand{\rank}{\operatorname{rank}}
\providecommand{\cov}{}
\renewcommand{\cov}{\operatorname{cov}}
\newcommand{\diag}{\operatorname{diag}}
\newcommand{\Rec}{\mathsf{R}}
\newcommand{\summ}{\mathsf{s}}
\newcommand{\pf}{\#}
\newcommand{\T}{^{\top}}

\newcolumntype{Y}{>{\raggedright\arraybackslash}X}
\newcolumntype{P}[1]{>{\raggedright\arraybackslash}p{#1}}

\theoremstyle{plain}
\newtheorem{theorem}{Theorem}
\newtheorem{proposition}[theorem]{Proposition}
\newtheorem{corollary}[theorem]{Corollary}

\theoremstyle{definition}

\newtheorem{remark}[theorem]{Remark}

\begin{document}

\title{Coherence is not enough: Aggregation constraints across predictive distributions, forecasts and decisions}

\author{Shun Hu\thanks{Email: \texttt{shunhu@buaa.edu.cn}.}
\and Yanfei Kang\thanks{Corresponding author. Email: \texttt{yanfeikang@buaa.edu.cn}.}}
\date{\small School of Economics and Management, Beihang University\\
Beijing 100191, China}
\maketitle

\begin{abstract}
\noindent

Forecasts made at different levels of aggregation are often required to agree—for example, regional forecasts should sum to the national total. Forecast reconciliation imposes such relationships, but the meaning of agreement depends on whether the constraint is applied to a predictive distribution, a reported summary such as a mean or quantile, or a decision based on the forecast. We show that these operations are not interchangeable. Even when every possible outcome from a predictive distribution satisfies an aggregation rule, its separately reported marginal quantiles need not add up; under nonlinear relationships, even the coordinatewise means may violate the constraint. We characterise when linear reconciliation preserves marginal quantiles and show that nonmedian quantiles generally cannot be preserved in a genuine hierarchy. We also explain why conditioning on an exact nonlinear constraint requires specifying how that constraint is observed or approximated, and identify when the predictive mean contains enough information for a constrained decision. In an application to Australian tourism data with 28 rolling forecast origins, the median discrepancy between two natural constructions of the national 95th percentile is 159.9\% of the reconciled 90\% prediction-interval width under bottom-up reconciliation. Rankings of mean- and quantile-based forecasts also reverse when squared-error loss is replaced by asymmetric inventory loss. These results provide a framework for deciding what should be made coherent and how competing methods should be evaluated.
\end{abstract}

\noindent\textbf{Keywords:} {Bregman loss; comonotonicity; forecast reconciliation;
hierarchical and grouped time series; marginal quantiles}

\section{Introduction}\label{sec:intro}

Forecast reconciliation is usually posed as a geometric problem. For a
hierarchical or grouped collection the observations satisfy $\bY=\bS\bB$, the
columns of $\bS$ span a coherent subspace $\cC$, base forecasts need not lie
in $\cC$, and a reconciliation method maps them back
\citep{hyndman2011,wickramasuriya2019,panagiotelis2021}. Ordinary least
squares, weighted least squares and minimum trace (MinT) reconciliation
differ in the criterion that selects the coherent point.

Probabilistic reconciliation now includes mappings of samples or
distributions \citep{bentaieb2017,jeon2019,panagiotelis2023}, Bayesian updating
and conditioning \citep{corani2021,zambon2024efficient}, and bottom-level joint
models that generate coherent distributions by construction
\citep{bertani2025}. At the report layer, \citet{han2021} regularise quantile
forecasts toward aggregation consistency. \citet{honguyen2026} instead retain
a multivariate belief distribution and select its reconciliation map for
pinball performance; marginal quantiles extracted from the resulting coherent
distribution need not themselves add. The same algebraic relation therefore
enters different forecasting objects and objectives across this literature.

Several papers isolate parts of that distinction. \citet{kolassa2023} shows
that loss-optimal quantiles generally do not add and argues for joint
distributional forecasts. \citet{west2024} distinguishes conditioning,
externally imposed aggregates and constrained actions, while
\citet{athanasopouloskourentzes2023} emphasise evaluation and decision
objectives. \citet{zhang2024} contrast conditioning with mapping, and
\citet{nguyen2026information} separate the effect of aggregation constraints
from the information combined by reconciliation. We organise these
distinctions through a common question: where does the aggregation relation
come from, and which output must satisfy it?

A structural identity belongs to the data-generating mechanism, whereas an
additive table is a reporting requirement. An externally observed total
updates a predictive distribution, and a resource constraint restricts a
decision. Coherence verifies that the final output satisfies the relation
without identifying which operation produced it. We study three resulting
problems across the forecasting sequence: identifying a constrained
distribution, transporting an aggregation relation through summary extraction,
and selecting a loss-aligned action. The classical
affine--Gaussian--quadratic benchmark can hide these distinctions because
several operations share a common point. Nonmedian summaries and actions that
use information beyond the mean reveal which forecasting object was
constrained.

\subsection{Six coherent answers to one aggregate}

Consider two stores with independent future demands
\[
Y_1\sim\mathrm{Gamma}(4,10),\qquad
Y_2\sim\mathrm{Gamma}(36,5/3),
\]
where the second argument is scale, and let $P_0$ denote their joint
distribution under independence.
The means are $(40,60)\T$, the variances are $(400,100)\T$, and management
supplies the total $\tgt=85$.
Suppose a pair is mechanically used as an inventory allocation under
\begin{equation}\label{eq:invloss}
L(\by,\ba)=\sum_{i=1}^{2}\bigl\{c^{o}_i(a_i-y_i)_{+}
  +c^{u}_i(y_i-a_i)_{+}\bigr\},
\qquad
(c^{u}_1,c^{o}_1)=(9,1),\quad(c^{u}_2,c^{o}_2)=(1.5,1).
\end{equation}
The unconstrained optimal quantile levels are $0.9$ and $0.6$.

Table~\ref{tab:six} displays six interpretations of ``reconcile to
$85$''. Every output satisfies the same equation, yet the value assigned
to Store~1 ranges from $27.51$ to $51.11$. If every pair is used as an inventory
allocation and evaluated under the common loss in the last column, regret
ranges from zero to $61.3\%$. Thus a coherence check accepts six outputs
with materially different decision consequences.

\begin{table}[!ht]
\centering
\footnotesize
\caption{Six outputs satisfying the same two-store total $\tgt=85$ under
the same base distribution. The criterion in the fifth column is specific
to each operation, so its values are not comparable across rows. The last
column reports expected inventory loss \eqref{eq:invloss} when each pair is
used as an allocation under the unchanged base distribution; parentheses
give regret relative to the constrained Bayes action. A post-policy
evaluation of the final row would require a causal model.}
\label{tab:six}
\begin{tabularx}{\textwidth}{@{}P{1.9cm}YrrP{2.45cm}r@{}}
\toprule
Role of total & Operation & Store 1 & Store 2 & Method-specific criterion
  & Inventory loss \\
\midrule
Conditioning event & Conditional mean given $Y_1+Y_2=\tgt$
  & 27.51 & 57.49 & total log score $4.021$ & 146.98 $(61.3\%)$\\
Mean constraint & Minimum-$\KL$ distribution with $\E_q(Y_1+Y_2)=\tgt$
  & 28.70 & 56.30 & $\KL(q\Vert P_0)=0.269$ & 140.62 $(54.3\%)$\\
Structural identity & Euclidean projection of the base mean
  & 32.50 & 52.50 & squared displacement $112.5$ & 123.15 $(35.2\%)$\\
Structural identity & Variance-weighted projection of the base mean
  & 28.00 & 57.00 & weighted displacement $0.45$ & 144.31 $(58.4\%)$\\
Feasibility constraint & Constrained Bayes action under \eqref{eq:invloss}
  & 51.11 & 33.89 & expected loss $91.11$ & 91.11 $(0.0\%)$\\
Declared policy & Declared $50\!:\!50$ allocation policy
  & 42.50 & 42.50 & declared split & 96.87 $(6.3\%)$\\
\bottomrule
\end{tabularx}
\end{table}

The method-specific criteria prevent the fifth column from being read as a
common ranking: conditioning, minimum-$\KL$ revision, point projection,
Bayes action and policy declaration solve different problems. The final
column instead quantifies what happens if each output is used for the same
inventory decision. Appendix~\ref{app:gamma} gives the one-dimensional
calculations.

\subsection{Contributions and relationship to prior results}

The framework connects conditioning, quantile aggregation, decision evaluation
and mean elicitation. Zero-probability conditioning and the co-area formula
provide the distributional foundation
\citep{kolmogorov1956,changpollard1997,diaconis2013}, with Bayesian updating
and conditioning adapted to forecast reconciliation by
\citet{corani2021,zambon2024efficient}. Quantile nonadditivity and its link to
the elicited functional are discussed by \citet{kolassa2023} and
\citet{athanasopoulos2024}; the all-probability-level characterisation follows
from dependence theory \citep{dhaene2002,cheung2010}. Decision and evaluation
objectives are analysed by \citet{athanasopouloskourentzes2023}, \citet{west2024}
and \citet{honguyen2026}, and mean-consistent losses admit the Bregman
representation \citep{banerjee2005,gneiting2011,abernethy2012}.

Our contribution is a common analysis of these distinctions and sharp
boundaries for transporting an aggregation relation between forecasting
outputs. Proposition~\ref{prop:tube} expresses the Borel--Kolmogorov conditioning
ambiguity through the residual representation, which selects density along
a nonlinear constraint. Theorem~\ref{thm:commute} exactly characterises
the affine maps that preserve marginal quantiles uniformly and yields the
genuine-hierarchy no-go result. Equation~\eqref{eq:gap} expresses the Gaussian
commutator as a reconciliation-specific diversification diagnostic.
Corollary~\ref{thm:bregman} applies the classical Bregman risk decomposition
to constrained decisions, and Proposition~\ref{prop:nogo} supplies explicit
equal-mean witnesses outside mean sufficiency. Proposition~\ref{prop:gauss}
then synthesises Gaussian and quadratic identities to explain why these
operations can appear interchangeable. The tourism study quantifies the
report-level order effect through observable forecast paths and illustrates a
target-dependent ranking reversal between two reported outputs.

The argument follows the forecasting sequence. Section~\ref{sec:specification}
uses the six answers in Table~\ref{tab:six} to define a reconciliation
specification, distinguish two construction orders and state a comparison
contract. Sections~\ref{sec:conditioning}--\ref{sec:decision} examine
constraints on predictive distributions, reported forecasts and constrained
actions. Section~\ref{sec:gauss} explains why the classical benchmark hides
their differences at the point-summary level, and Section~\ref{sec:empirical}
quantifies the quantile order effect and gives a target-dependent evaluation
illustration in Australian tourism data. The final section discusses scope,
comparison and implications.

\section{A framework for specifying reconciliation}
\label{sec:specification}

\subsection{Four stages and the specification triple}

The six answers differ because the same equation enters at different stages
and is completed by different selection rules. A complete specification must
therefore locate the aggregation relation within the forecasting sequence.
That sequence has four stages:
\begin{equation}\label{eq:layers}
\Gen \longrightarrow \Pred \longrightarrow \Rep \longrightarrow \Act,
\qquad \Act \rightsquigarrow \Gen^{\pi},
\end{equation}
where $\Gen$ is the generative mechanism, $\Pred$ the predictive
distribution, $\Rep$ the reported forecast and $\Act$ the action or policy; a
policy selected at $\Act$ replaces $\Gen$ by a post-policy mechanism
$\Gen^{\pi}$. Figure~\ref{fig:pipeline} separates this forward flow from the
stage at which an aggregation relation enters.

\begin{figure}[!ht]
  \centering
  \resizebox{\textwidth}{!}{\begin{tikzpicture}[
  >=Latex,
  node distance=10mm and 12mm,
  stage/.style={draw, rounded corners=1.2pt, minimum width=24mm,
    minimum height=9mm, align=center, line width=0.65pt},
  stageG/.style={stage, draw=layerG!75, fill=layerG!11},
  stageP/.style={stage, draw=layerP!80, fill=layerP!13},
  stageF/.style={stage, draw=layerF!75, fill=layerF!11},
  stageA/.style={stage, draw=layerA!75, fill=layerA!11},
  meaning/.style={font=\footnotesize, align=center},
  flow/.style={->, line width=0.55pt},
  attach/.style={->, densely dashed, line width=0.5pt}
]
  \node[stageG] (g) {$\Gen$\\generative mechanism};
  \node[stageP, right=of g] (p) {$\Pred$\\predictive distribution};
  \node[stageF, right=of p] (f) {$\Rep$\\reported forecast};
  \node[stageA, right=of f] (a) {$\Act$\\action or policy};

  \draw[flow] (g) -- (p);
  \draw[flow] (p) -- (f);
  \draw[flow] (f) -- (a);

  \node[meaning, above=13mm of g] (str) {structural relation\\$h(\bY)=0$ a.s.};
  \node[meaning, above=13mm of p] (evd) {observed information or\\belief restriction};
  \node[meaning, above=13mm of f] (rep) {reporting requirement\\$h(\widehat{\by})=0$};
  \node[meaning, above=13mm of a] (dec) {decision requirement\\$h(\ba)=0$};
  \node[meaning, below=13mm of a] (int)
    {policy target\\select $\pi$; implement $\Gen\mapsto\Gen^\pi$};

  \draw[attach, draw=layerG!75] (str) -- (g);
  \draw[attach, draw=layerP!85] (evd) -- (p);
  \draw[attach, draw=layerF!75] (rep) -- (f);
  \draw[attach, draw=layerA!75] (dec) -- (a);
  \draw[attach, draw=layerA!75] (int) -- (a);
  \draw[flow, draw=layerG!75, rounded corners=2pt]
    (int.west) -| node[pos=.58, below, font=\scriptsize] {implementation} (g.south);
\end{tikzpicture}}
  \caption{Four forecasting stages and the points at which the same
  aggregation relation can enter. Dashed arrows attach the relation to the
  output required to satisfy it; solid arrows show forecast production and
  policy implementation. The framework records both the source of the relation
  and the output on which it is enforced.}
  \label{fig:pipeline}
\end{figure}

A reconciliation specification is the triple
\begin{equation}\label{eq:spec}
\text{specification}
= (\text{constraint source},\ \text{constrained output},
\ \text{selection criterion}).
\end{equation}
MinT applies a structural aggregation relation to a reported vector and
selects that vector by a covariance-weighted quadratic criterion.
Conditioning on an externally supplied total instead updates a predictive
distribution. A constrained Bayes rule imposes a feasibility constraint on
an action and selects it by expected operational loss.

The triple locates conditioning versus mapping \citep{zhang2024} and forecast
construction versus evaluation and decision objectives
\citep{athanasopouloskourentzes2023}. Its source coordinate also keeps the
aggregation relation distinct from information combined by the selected method
\citep{nguyen2026information}. Joint bottom-up modelling is one instance:
it places a structural aggregation relation at the predictive-distribution
output and selects the law through a joint model of the bottom-level series
\citep{bertani2025}.

When the constraint originates at the same stage as the required output, the
operation is an update within that stage. When the stages differ, the order of
enforcement and the output-forming operation must also be specified.

\subsection{Two possible orders of operation}

Write $\Rec$ for an operation that enforces the relation and $\summ$ for an
output-forming operation---extracting a summary, forming a marginal or
selecting an action. Figure~\ref{fig:square} compares the two possible orders.

\begin{figure}[!ht]
\centering
\begin{tikzpicture}[
  node distance=1.35cm and 3.2cm,
  every node/.style={font=\small},
  arr/.style={-{Latex[length=2.2mm]},thick}
]
\node (p) {$P$};
\node[right=of p] (rp) {$\Rec(P)$};
\node[below=of p] (sp) {$\summ(P)$};
\node[right=of sp] (out) {$\cX_{\mathrm{out}}$};
\draw[arr] (p) -- node[above] {$\Rec$} (rp);
\draw[arr] (p) -- node[left] {$\summ$} (sp);
\draw[arr] (rp) -- node[right] {$\summ$} (out);
\draw[arr] (sp) -- node[below] {$\Rec$} (out);
\end{tikzpicture}
\caption{Two possible orders. The upper path enforces the aggregation
relation before forming the downstream output, $\summ\!\circ\!\Rec$; the lower
path forms the output first and then enforces the relation,
$\Rec\!\circ\!\summ$. Both paths end in the common output space
$\cX_{\mathrm{out}}$; they commute only when their endpoints are equal.}
\label{fig:square}
\end{figure}

When the two paths agree, the order of operations is immaterial. When they
differ, both endpoints may still satisfy the aggregation relation, so a
coherence diagnostic alone cannot reveal the difference. The square formalises
the question whenever both routes are well defined. The framework also covers
questions that are not literal commutation problems, including whether a
constraint identifies a distribution and whether a report suffices for a
decision.

\subsection{Comparison contract}

The square distinguishes construction paths; comparing their outputs also
requires a common evaluation target. An evaluation target---forecast accuracy,
proper-score performance, calibration or decision cost---is declared
separately. It judges the output without determining how that output was
constructed. A direct comparison therefore requires agreement on
\begin{equation}\label{eq:comparison-contract}
(\text{source of constraint},\ \text{constrained output},\ \text{evaluation target}).
\end{equation}
Together, the specification triple and comparison contract form a diagnostic
audit: the former records how each output is constructed, while the latter
determines whether two outputs address the same task.

Table~\ref{tab:six} makes the distinction operational. All six pairs satisfy
$Y_1+Y_2=\tgt$, so a coherence check cannot separate them. Their
method-specific criteria also have different units and cannot form a common
ranking. A common evaluation target instead exposes the cost of substituting
one output for another. Within point reconciliation, changing the weighting
matrix from $\bW=I$ to $\bW=\diag(400,100)$ changes the Store~1 allocation
from $32.50$ to $28.00$ and changes inventory loss by $21.17$. Replacing the
constrained decision by the conditional mean changes the Store~1 allocation
by $23.6$ and changes
inventory loss by $55.9$, a regret of $61.3\%$.

The degrees of freedom make this underdetermination explicit.
With $n$ series and $m$ bottom-level series, aggregation coherence is
$\bA\bY=0$ for a full-row-rank $\bA\in\R^{(n-m)\times n}$ with
$\cC=\ker\bA$, so coherence removes $n-m$ directions and leaves $m$.
Because $\bB$ coordinatises
$\cC$, the method---not the constraint---determines every bottom-level
value. This is the elementary content of $\dim\ker\bA=m$ and is implicit in
the geometric account of \citet{panagiotelis2021}. The count explains why a
coherent set admits many answers, but it does not identify the probability
distribution carried by that set. The information-combination effect identified
by \citet{nguyen2026information} is separate from this dimension count. Point
projection, distributional revision and conditioning act in different spaces,
so each requires its own selection or observation rule. The first question is
therefore how a constraint determines a predictive distribution.

\FloatBarrier

\section{A constraint set does not determine a predictive distribution}
\label{sec:conditioning}

Bayesian updating and conditioning specify predictive reconciliation laws
\citep{corani2021,zambon2024efficient}, while \citet{zhang2024} contrast
conditioning with a mapping construction for discrete forecasts. Within a
conditioning construction, we study representation invariance: when a
continuous nonlinear constraint is represented in different but equivalent
ways, do those representations induce the same law? Write the coherent set as
$\cM=\{\by: h(\by)=0\}$. The Borel--Kolmogorov problem arises because
conditioning on a zero-probability set requires a specification of how the
set is observed
\citep{kolmogorov1956,changpollard1997,diaconis2013}. Shrinking-tolerance
conditioning makes this specification explicit through the residual $h$.

\begin{proposition}[Shrinking-tolerance conditioning and density ambiguity]
\label{prop:tube}
Let $\bY$ have a continuous density $p_0$ on $\R^n$, let
$h:\R^n\to\R^k$ be continuously differentiable with rank $k$ on the
embedded manifold $\cM=h^{-1}(0)$, and set
$J_h(y)=\det\bigl(Dh(y)Dh(y)\T\bigr)^{1/2}$.
Assume
\begin{enumerate}[label=(A\arabic*),leftmargin=2.2em,itemsep=1pt]
\item $\int_{\cM}p_0J_h^{-1}\,d\Haus^{n-k}\in(0,\infty)$;
\item for each bounded continuous $f$, the map
$z\mapsto\int_{h^{-1}(z)}fp_0J_h^{-1}\,d\Haus^{n-k}$ is continuous at $0$
and dominated near $0$ by an integrable function.
\end{enumerate}
Then:
\begin{enumerate}[label=(\roman*),leftmargin=2.2em,itemsep=2pt]
\item for every bounded continuous $f$,
\begin{equation}\label{eq:tube}
\lim_{\epsilon\downarrow0}\E\bigl(f(\bY)\mid \lVert h(\bY)\rVert\le\epsilon\bigr)
=\frac{\displaystyle\int_{\cM} f\,p_0\,J_h^{-1}\,d\Haus^{n-k}}
       {\displaystyle\int_{\cM} p_0\,J_h^{-1}\,d\Haus^{n-k}} .
\end{equation}
\item if $\cM$ is compact and the limiting law in part (i) has a positive
smooth density, every probability distribution on $\cM$ with a positive
smooth density ratio relative to that law is induced by a smooth positive
scalar rescaling $\widetilde h=\phi h$ of the same constraint set.
\end{enumerate}
\end{proposition}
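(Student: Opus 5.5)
For part (i) I will use the co-area formula. Since $h$ is $C^1$ with rank $k$ near $\cM$, for any integrable $g$ on a neighbourhood of $\cM$ we have
\[
\int_{\{\lVert h\rVert\le\epsilon\}} g\,p_0\,dy
=\int_{\{\lVert z\rVert\le\epsilon\}}\Bigl(\int_{h^{-1}(z)} g\,p_0\,J_h^{-1}\,d\Haus^{n-k}\Bigr)dz .
\]
Writing $\Psi_f(z)=\int_{h^{-1}(z)}fp_0J_h^{-1}\,d\Haus^{n-k}$, the conditional expectation becomes
\[
\E\bigl(f(\bY)\mid\lVert h(\bY)\rVert\le\epsilon\bigr)=\frac{\int_{B_\epsilon}\Psi_f\,dz}{\int_{B_\epsilon}\Psi_1\,dz}.
\]
Dividing numerator and denominator by $\operatorname{vol}(B_\epsilon)$, assumption (A2) gives continuity of $\Psi_f$ and $\Psi_1$ at $0$, so both averages converge to $\Psi_f(0)$ and $\Psi_1(0)$. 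The domination clause in (A2) justifies applying co-area and Fubini on a fixed small ball. Assumption (A1) makes $\Psi_1(0)\in(0,\infty)$, so the ratio converges to \eqref{eq:tube}. One detail needs care: the rank condition is stated only on $\cM$, so I will note that rank $k$ persists on an open neighbourhood by lower semicontinuity of rank. For $\epsilon$ small, the tube then lies in that neighbourhood wherever it matters. If the tube is not contained there (noncompact $\cM$), the domination assumption is exactly what covers the remainder.

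For part (ii) I will compute how $J$ changes under $\widetilde h=\phi h$ with $\phi>0$ smooth. On $\cM$, $h=0$, so $D\widetilde h=\phi\,Dh+h\,D\phi\T=\phi\,Dh$. Hence $J_{\widetilde h}=\phi^{k}J_h$ on $\cM$. The zero set, and so the rank condition, is unchanged. Part (i) applied to $\widetilde h$ therefore yields a limiting law with density proportional to $p_0J_h^{-1}\phi^{-k}$ with respect to $\Haus^{n-k}$ on $\cM$. In other words, its density ratio relative to the law $\mu$ from part (i) is proportional to $\phi^{-k}|_{\cM}$.

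Given a target law with positive smooth ratio $r$ with respect to $\mu$, I set $\phi|_{\cM}=r^{-1/k}$. This is positive and smooth on $\cM$. I then extend it to a smooth positive function on $\R^n$, for instance using a tubular neighbourhood retraction and a cutoff blended with the constant $1$; compactness of $\cM$ guarantees the tube exists and the extension stays bounded away from zero. Normalisation constants cancel in the ratio in \eqref{eq:tube}, so $r$ need only be determined up to scale.

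The main obstacle is verifying that the rescaled $\widetilde h$ still satisfies (A1)–(A2), so that part (i) genuinely applies. (A1) is immediate because $\phi^{-k}$ is bounded above and below on compact $\cM$. For (A2), I will use compactness: near $\cM$ the level sets of $\widetilde h$ are smooth perturbations of $\cM$ via the tubular neighbourhood. With $p_0$ continuous and $\phi$, $J_{\widetilde h}$ continuous and bounded away from $0$ there, continuity of $z\mapsto\int_{\widetilde h^{-1}(z)}fp_0J_{\widetilde h}^{-1}$ at $0$ follows by parametrising the fibres over $\cM$ and applying dominated convergence. Choosing the extension $\phi\equiv1$ outside a compact neighbourhood keeps the far-field behaviour identical to that of $h$, and in any case for small $\epsilon$ the tube stays within the compact neighbourhood.
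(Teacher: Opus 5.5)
Your proposal is correct and follows the paper's proof essentially step for step. For (i) you use the co-area formula and average the fibre integrals over the shrinking ball; for (ii) you use $J_{\phi h}=\phi^{k}J_h$ on $\cM$, set $\phi|_{\cM}=r^{-1/k}$, and extend $\phi$ off $\cM$.

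Your near/far-field check that $\widetilde h$ still satisfies (A1)--(A2), with $\phi\equiv1$ away from $\cM$, usefully fills in what the paper compresses to ``compactness gives normalisability''. Drop, however, the assertions that the tube $\{\lVert h\rVert\le\epsilon\}$ eventually lies in a compact neighbourhood of $\cM$ and that domination covers the remainder: $h$ may be small far from $\cM$, and domination of the fibre integrals does not control that mass. The co-area identity you and the paper both use tacitly requires the critical set $\{J_h=0\}$ to carry no probability mass inside the tube.
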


The proof is given in Appendix~\ref{app:proofs}.

Write $p_0^{h}$ for the normalised density on $\cM$ in \eqref{eq:tube}.

\begin{remark}[The ambiguity is in $h$, not the tolerance shape]\label{rem:tubeshape}
Replacing the Euclidean ball by $\epsilon K$, where $K$ is a bounded
measurable neighbourhood of the origin with $|K|>0$, multiplies numerator
and denominator by $|\epsilon K|$ and leaves
\eqref{eq:tube} unchanged. The limit therefore does not depend on the
shape of the tolerance region; it depends on the function $h$ used to
express the constraint. If
the analyst instead posits a measurement model $h(\bY)+\sigma\varepsilon$
and lets $\sigma\downarrow0$, the same expression is obtained with $J_h$
replaced by the Jacobian of the measured residual. Declaring the observed
residual therefore completes the specification of the operation.
\end{remark}

Thus the coherent set fixes support but leaves the density along that set
undetermined: the constraint alone does not define the $(\Pred,\Pred)$
update.

\subsection{One rate identity, two conditional laws}\label{sec:ratios}

Consider forecasts of the number unemployed $U$, the labour force $N$ and
the unemployment rate $r$. These quantities satisfy $U=rN$, although
forecasts from separate models may not. Across regions, the counts add
and the national rate is $\sum_i U_i/\sum_i N_i$.
When future labour-force sizes are uncertain, these relations form a system
of linear and nonlinear constraints. Mortality forecasts have the same
structure through deaths, population exposure and mortality rates.
\citet{girolimetto2025} study both applications. Tourism volumes and regional
shares provide another example, examined by \citet{biswas2026}.

Using a continuous approximation for counts on $N>0$, consider the residuals
\[
h_1(U,N,r)=U-rN,\qquad h_2(U,N,r)=U/N-r.
\]
The first measures discrepancy in numbers of people and the second in rates.
Both vanish on the same surface, but $h_2=h_1/N$. Under the assumptions of
Proposition~\ref{prop:tube}, their shrinking-tolerance laws therefore satisfy
\[
p_0^{h_2}(U,N,r)\ \propto\ N\,p_0^{h_1}(U,N,r).
\]
The rate tolerance $|U/N-r|\leq\epsilon$ permits a count discrepancy
$|U-rN|\leq\epsilon N$. It therefore allows a wider band in count units
when the labour force is larger, producing the factor $N$ in the conditional
density. The identity fixes the surface, while the tolerance rule determines
the distribution along it. Appendix~\ref{app:revenue} gives a lognormal
product example with explicit conditional laws.

For the declared canonical affine residual $h(\by)=\bA\by-\bb$, the co-area
Jacobian is the constant $\det(\bA\bA\T)^{1/2}$ and introduces no
position-dependent reweighting along the constraint set. For nonlinear
relations, conditioning uses the chosen observation map \citep{biswas2026},
whereas projection uses a distance to the constraint surface
\citep{girolimetto2025,nespoli2026}.
Once an observation rule fixes the predictive distribution, the order problem
moves to the $\Pred\to\Rep$ transition: whether summary extraction commutes
with reconciliation.

\section{From a predictive distribution to a reported forecast}\label{sec:report}

Once specified, a predictive distribution is communicated through summaries.
At the $\Pred\to\Rep$ transition, the two paths in
Figure~\ref{fig:square} compare reconciliation before and after summary
extraction. Let $\bW$ be positive definite,
$\cC=\{\by:\bA\by=\bb\}$ with $\rank(\bA)=r$, and
\begin{equation}\label{eq:TW}
\begin{aligned}
T_{\bW}(\by)&=\by-\bW\bA\T(\bA\bW\bA\T)^{-1}(\bA\by-\bb)
  =\bM\by+\bd,\\
\bM&=I-\bW\bA\T(\bA\bW\bA\T)^{-1}\bA .
\end{aligned}
\end{equation}
Then $\bM^2=\bM$ and $\ran\bM=\ker\bA$, which is the linear subspace parallel to
$\cC$ and coincides with $\cC$ only when $\bb=0$.
For $\summ=\E$ and any affine map, the two paths in
Figure~\ref{fig:square} agree by linearity of expectation. Changing either
the summary or the geometry of the constraint can make the paths differ.
Table~\ref{tab:twobytwo} states the pattern.

\begin{table}[!ht]
\centering\small
\caption{Do reconciliation and summary extraction commute? Only an affine
constraint combined with the mean summary guarantees equality.}
\label{tab:twobytwo}
\begin{tabular}{@{}lll@{}}
\toprule
 & mean summary & marginal-quantile summary\\
\midrule
affine constraint & yes (linearity) & not uniformly in a genuine hierarchy
  (Thm.~\ref{thm:commute})\\
curved constraint & generally no (Prop.~\ref{prop:curv}) & generally no\\
\bottomrule
\end{tabular}
\end{table}

\subsection{Curved constraints and mean reports}\label{sec:curvature}

The equality for the mean in the upper-left cell of
Table~\ref{tab:twobytwo} relies on the constraint being affine. For the
unemployment relation in Section~\ref{sec:ratios},
\[
\E(U)=\E(rN)=\E(r)\E(N)+\cov(r,N).
\]
For example, two equally likely scenarios $(U,N,r)=(10,100,0.1)$ and
$(40,200,0.2)$ each satisfy the relation. Their coordinate means are
$(25,150,0.15)$, whereas $25/150=1/6$. The mean rate and the ratio of
mean counts are different summaries of the same coherent distribution.
The following result describes the local displacement of a mean from a
smooth constraint in terms of curvature and predictive dispersion.

\begin{proposition}[Curvature--variance displacement]\label{prop:curv}
Let $\cM\subset\R^n$ be a $C^2$ hypersurface, $\by_0\in\cM$, $\bnu$ a unit
normal, and write the local graph as
$\bY^{\epsilon}=\by_0+\bU^{\epsilon}+g(\bU^{\epsilon})\bnu$
with $\bU^{\epsilon}$ tangential, $g(0)=0$, $Dg(0)=0$. Suppose
$\E(\bU^{\epsilon})=0$,
$\bSigma_{T,\epsilon}=\cov(\bU^{\epsilon})$,
$\bU^{\epsilon}\to0$ in probability, and the Taylor remainder satisfies
$g(\bu)=\tfrac12\bu\T D^2g(0)\bu+R(\bu)$ and
$\E|R(\bU^{\epsilon})|=o[\E\lVert \bU^{\epsilon}\rVert^2]$. With
$\mathrm{II}_{\by_0}=D^2g(0)$ the signed second fundamental form,
\begin{equation}\label{eq:curv}
\bigl\langle \E(\bY^{\epsilon})-\by_0,\ \bnu\bigr\rangle
=\tfrac12\tr\bigl(\mathrm{II}_{\by_0}\bSigma_{T,\epsilon}\bigr)
 +o\bigl[\E\lVert \bU^{\epsilon}\rVert^2\bigr],
\end{equation}
which in principal coordinates is $\tfrac12\sum_i\kappa_i\sigma_i^2$.
\end{proposition}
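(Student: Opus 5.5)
The identity follows from taking expectations in the local graph representation. Since $\bU^{\epsilon}$ is tangential and $\bnu$ is a unit normal, $\langle \bU^{\epsilon},\bnu\rangle=0$. Hence
\[
\bigl\langle \bY^{\epsilon}-\by_0,\bnu\bigr\rangle=g(\bU^{\epsilon})
=\tfrac12\,\bU^{\epsilon\top}D^2g(0)\bU^{\epsilon}+R(\bU^{\epsilon}).
\]
Expectation commutes with the inner product against the fixed vector $\bnu$. So $\langle\E(\bY^{\epsilon})-\by_0,\bnu\rangle=\E g(\bU^{\epsilon})$, provided $\E g(\bU^{\epsilon})$ is finite. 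That finiteness follows from finite second moments together with integrability of $R$, and both are implicit in the hypotheses: $\bSigma_{T,\epsilon}$ exists, and $\E|R(\bU^{\epsilon})|=o[\E\lVert\bU^{\epsilon}\rVert^2]$ is assumed.

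For the quadratic term I would use the trace trick. Write $\bu\T H\bu=\tr(H\bu\bu\T)$ with $H=D^2g(0)$, and use linearity of trace and expectation to get $\E(\bU^{\epsilon\top}H\bU^{\epsilon})=\tr\bigl(H\,\E(\bU^{\epsilon}\bU^{\epsilon\top})\bigr)$. Because $\E(\bU^{\epsilon})=0$, the second-moment matrix equals $\bSigma_{T,\epsilon}$. This gives the leading term $\tfrac12\tr(\mathrm{II}_{\by_0}\bSigma_{T,\epsilon})$. The remainder is handled by the assumption directly: $|\E R(\bU^{\epsilon})|\le\E|R(\bU^{\epsilon})|=o[\E\lVert\bU^{\epsilon}\rVert^2]$. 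This yields \eqref{eq:curv}.

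For the principal-coordinate form, I would diagonalise the symmetric matrix $\mathrm{II}_{\by_0}$ by an orthonormal basis of the tangent space, so that $\mathrm{II}_{\by_0}=\diag(\kappa_1,\dots,\kappa_{n-1})$ with the $\kappa_i$ the principal curvatures. The identification $D^2g(0)=\mathrm{II}_{\by_0}$ holds because $Dg(0)=0$: the shape operator at the base point of a graph is the Hessian of the height function, with sign fixed by $\bnu$. In that basis, $\tr(\mathrm{II}\bSigma)=\sum_i\kappa_i\Sigma_{ii}$, where $\sigma_i^2=\Sigma_{ii}$ is the tangential variance along the $i$th principal direction. Off-diagonal covariances do not contribute.

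The calculation itself is routine. The main care point is interpretive. The hypothesis $\bU^{\epsilon}\to0$ in probability is not needed for the algebra, since the moment condition on $R$ already carries the asymptotics; it only justifies that the local graph chart is where the mass concentrates. I would also note that $\bY^{\epsilon}$ lies on $\cM$ almost surely by construction, so the displacement is purely a curvature effect.
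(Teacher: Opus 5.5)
Your proof is correct and matches the paper's argument. You project onto $\bnu$ using tangentiality, apply the trace identity to the quadratic term, use centring to turn the second moment into $\bSigma_{T,\epsilon}$, bound the remainder by the assumed moment condition, and diagonalise to get the principal form. Using the hypothesis $\E|R(\bU^{\epsilon})|=o[\E\lVert\bU^{\epsilon}\rVert^2]$ directly is slightly cleaner than the paper's appeal to a pointwise $o(\lVert\bu\rVert^2)$ remainder plus domination.
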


The centring condition selects $\by_0$ as the tangential centre of the local
distribution and is essential. Affine constraints have vanishing second
fundamental form, which is why a coherent predictive distribution can be
summarised by its mean without a second reconciliation in a linear hierarchy.
Curvature therefore gives one source of report-level noncommutation. Even under
affine geometry, however, a nonlinear reported functional can create another;
marginal quantiles provide the leading case.

\subsection{Marginal quantiles}\label{sec:quantiles}

We next keep the reconciliation map affine and change the reported functional.
The nonadditivity of marginal quantiles and the resulting conflict between
additive point coherence and loss-optimal quantile reports are known
\citep{kolassa2023,athanasopoulos2024}.
The following result sharpens that observation by characterising exactly when
an affine map makes the square in Figure~\ref{fig:square} commute uniformly
over a distribution class.
Use the lower-quantile convention
\[
Q_X(\alpha)=\inf\{x:F_X(x)\ge\alpha\},\qquad
q_\alpha(P)=\bigl(Q_{Y_1}(\alpha),\dots,Q_{Y_n}(\alpha)\bigr)\T.
\]
For $\bY\sim P$, $T_{\pf}P$ denotes the distribution of $T(\bY)$.

\begin{theorem}[When affine maps preserve marginal quantiles]\label{thm:commute}
Fix $\alpha\neq1/2$ and let $T(\by)=\bM\by+\bd$. Suppose a class of base distributions
contains every nondegenerate Gaussian distribution on $\R^n$ with diagonal
covariance. Then
\begin{equation}\label{eq:commute}
q_\alpha(T_{\pf}P)=T\bigl(q_\alpha(P)\bigr)
\end{equation}
holds for every distribution in the class if and only if each row of $\bM$ is zero or
has exactly one nonzero entry, and that entry is positive. The row
condition is sufficient for arbitrary base distributions, including non-Gaussian
ones.
\end{theorem}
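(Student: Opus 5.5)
Both directions reduce to coordinatewise statements, because $q_\alpha$ acts coordinatewise and the $i$th coordinate of $T(\bY)$ is $\sum_j M_{ij}Y_j+d_i$. For the \emph{sufficiency} direction, fix a row $i$. If the row is zero, then $T(\bY)_i=d_i$ is constant, so $Q_{T(\bY)_i}(\alpha)=d_i=T(q_\alpha(P))_i$. If the row has a single nonzero entry $M_{ij}=c>0$, then $T(\bY)_i=cY_j+d_i$ is a strictly increasing affine function of $Y_j$. The lower-quantile convention $Q_X(\alpha)=\inf\{x:F_X(x)\ge\alpha\}$ is equivariant under such maps: $F_{cY_j+d_i}(x)=F_{Y_j}((x-d_i)/c)$, so the infimum transforms to $cQ_{Y_j}(\alpha)+d_i$. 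This argument uses no distributional assumption, so it establishes the final sentence of the theorem for arbitrary base distributions.

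For \emph{necessity}, assume \eqref{eq:commute} holds for every Gaussian $P=N(\bmu,\diag(\sigma_1^2,\dots,\sigma_n^2))$ with all $\sigma_j>0$. Write $z=\Phi^{-1}(\alpha)$; since $\alpha\neq1/2$, we have $z\neq0$. The marginal quantiles are $q_\alpha(P)_j=\mu_j+z\sigma_j$. The $i$th coordinate of $T(\bY)$ is Gaussian with mean $\sum_jM_{ij}\mu_j+d_i$ and standard deviation $(\sum_jM_{ij}^2\sigma_j^2)^{1/2}$, which may be zero, in which case the quantile is the constant. The mean terms cancel on both sides, so \eqref{eq:commute} in row $i$ becomes
\[
z\Bigl(\sum_j M_{ij}^2\sigma_j^2\Bigr)^{1/2}=z\sum_j M_{ij}\sigma_j
\qquad\text{for all }\sigma\in(0,\infty)^n .
\]
Dividing by $z\neq0$ gives $\lVert(M_{ij}\sigma_j)_j\rVert_2=\sum_jM_{ij}\sigma_j$. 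This is the only place the hypothesis $\alpha\neq1/2$ enters.

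It remains to show this identity forces the row condition. First, the right side must be nonnegative. By the triangle inequality $\sum_j M_{ij}\sigma_j\le\sum_j|M_{ij}|\sigma_j$, and since $\ell_1\ge\ell_2$ we have $\sum_j|M_{ij}|\sigma_j\ge\lVert(M_{ij}\sigma_j)\rVert_2$. So equality requires both inequalities to be tight. The first is tight only if every nonzero $M_{ij}$ is positive. The second is tight only if at most one product $M_{ij}\sigma_j$ is nonzero. Since every $\sigma_j>0$, that means at most one $M_{ij}\neq0$, and that entry is positive. A single choice such as $\sigma=\bone$ already suffices.

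The main obstacle is modest: treating the degenerate-row case correctly in the Gaussian computation, where the quantile of a point mass equals its value under the lower convention. This case is already covered, since a zero standard deviation makes both sides agree. The reduction shows the commutation failure is exactly the $\ell_2$-versus-$\ell_1$ diversification gap. As a consequence, in a genuine hierarchy the projection $\bM$ with $\ran\bM=\ker\bA$ must have some aggregate row with at least two nonzero entries, so it cannot satisfy the row condition; this gives the no-go result.
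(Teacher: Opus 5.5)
Your sufficiency argument and your Gaussian reduction of row $i$ to $\bigl(\sum_j\bM_{ij}^2\sigma_j^2\bigr)^{1/2}=\sum_j\bM_{ij}\sigma_j$ for all $\sigma\in(0,\infty)^n$ agree with the paper. However, the step that extracts the row condition from this identity does not follow. Your two inequalities are $\sum_j\bM_{ij}\sigma_j\le\sum_j|\bM_{ij}|\sigma_j$ and $\lVert(\bM_{ij}\sigma_j)_j\rVert_2\le\sum_j|\bM_{ij}|\sigma_j$. Both bound a side of the identity from above by the same $\ell_1$ quantity, so they do not link the two sides to each other. Equality of the two smaller quantities therefore forces neither inequality to be tight.

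Your claim that $\sigma=\bone$ already suffices is also false. For the row $(2,2,-1)$ one has $\sum_j\bM_{ij}=3=\sqrt{4+4+1}$, so the identity holds at $\sigma=\bone$ even though the row has three nonzero entries, one of them negative. In general, a single $\sigma$ only yields $\sum_{j<k}\bM_{ij}\bM_{ik}\sigma_j\sigma_k=0$ together with $\sum_j\bM_{ij}\sigma_j\ge0$, and signed rows can satisfy both.

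The missing idea is that necessity must use the identity across the whole open orthant of $\sigma$. The paper squares the identity to obtain $2\sum_{j<k}\bM_{ij}\bM_{ik}\sigma_j\sigma_k=0$ for every positive $\sigma$. A polynomial vanishing on a nonempty open set has all coefficients zero, so $\bM_{ij}\bM_{ik}=0$ for $j\neq k$. The unsquared identity then makes the surviving nonzero entry positive.

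Your $\ell_1$/$\ell_2$ idea can also be rescued by varying $\sigma$. If some $\bM_{ik}<0$, take $\sigma_k$ large with the other $\sigma_j$ fixed; the right-hand side becomes negative while the left-hand side is nonnegative, a contradiction. Once all entries are nonnegative, the right-hand side equals $\lVert(\bM_{ij}\sigma_j)_j\rVert_1$. Equality of $\ell_1$ and $\ell_2$ norms then holds only when at most one $\bM_{ij}\sigma_j$ is nonzero, hence at most one $\bM_{ij}$. With either repair your proof is complete; as written, the necessity direction is not established.
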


If $\bM$ is in addition idempotent, each nonzero row is $c_ie_{\pi(i)}\T$
with $c_i>0$ and row $\pi(i)$ of $\bM$ equals $e_{\pi(i)}\T$. A map
satisfying \eqref{eq:commute} may therefore copy and positively rescale anchor
coordinates; it need not be a coordinate projection. Equivalently, the two
orders can agree for a map that copies and rescales an anchor coordinate.
For example,
$\bigl(\begin{smallmatrix}1&0\\1&0\end{smallmatrix}\bigr)$ is idempotent
and satisfies \eqref{eq:commute}, yet its range is the line $x_1=x_2$.

\begin{corollary}[Affine reconciliation cannot preserve all nonmedian marginal quantiles]\label{cor:mint}
Let $\cC$ be the coherent set of a hierarchy with $m\ge2$ freely varying
bottom-level series and a total coordinate equal to their sum. Let
$T(\by)=\bM\by+\bd$ be any affine map with
$\bM$ idempotent and $\ran\bM=\ker\bA$. For every $\alpha\neq1/2$, some
nondegenerate Gaussian base distribution satisfies
\[
q_\alpha(T_{\pf}P)\ne T\bigl(q_\alpha(P)\bigr).
\]
In particular this holds for
$T_{\bW}$ in \eqref{eq:TW} for every positive definite $\bW$, hence for OLS,
WLS and MinT.
\end{corollary}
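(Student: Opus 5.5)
The plan is to reduce Corollary~\ref{cor:mint} to the ``only if'' direction of Theorem~\ref{thm:commute}. That direction says: if \eqref{eq:commute} holds for every nondegenerate Gaussian with diagonal covariance, then every row of $\bM$ is zero or has exactly one nonzero entry, which is positive. So it suffices to show that no idempotent $\bM$ with $\ran\bM=\ker\bA$ can satisfy this row condition in a genuine hierarchy. Once that is shown, the theorem itself produces a diagonal-covariance Gaussian witness $P$ with $q_\alpha(T_{\pf}P)\ne T(q_\alpha(P))$ for the fixed $\alpha\neq1/2$.

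I will argue by contradiction and assume $\bM$ satisfies the row condition. By the remark following Theorem~\ref{thm:commute}, idempotence then forces each nonzero row to have the form $c_ie_{\pi(i)}\T$ with $c_i>0$ and row $\pi(i)$ equal to $e_{\pi(i)}\T$. Hence every vector in $\ran\bM$ has the form $\bM\by$, whose $i$th coordinate is $c_iy_{\pi(i)}$.

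So $\ran\bM$ consists of vectors whose coordinates are either identically zero or positive multiples of anchor coordinates. The anchors are indexed by the set $\{\pi(i)\}$, and these anchors vary independently of one another.

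Now I compare this with $\ker\bA$. Since $\ran\bM=\ker\bA$, this set is the coherent subspace spanned by the columns of $\bS$: the bottom-level coordinates $b_1,\dots,b_m$ vary freely, and the total coordinate equals $b_1+\cdots+b_m$. The total coordinate is not identically zero on $\ker\bA$, so it must equal $c\,y_j$ for some anchor $j$ with $c>0$.

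The total is a linear functional on an $m$-dimensional space, and the anchor coordinate is another. Two linear functionals that agree up to the factor $c$ on the whole of $\ker\bA$ must be proportional there. The anchor coordinate $y_j$ is itself some series of the hierarchy, expressed as a combination of the bottom series.

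The delicate step is showing that the total cannot be proportional to any single other coordinate. It will not do to say ``the anchor is a bottom series, so the total would be a multiple of one bottom series'', because the anchor could be the total itself. The row-$j$ condition $\bM_{j\cdot}=e_j\T$ is consistent with that.

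I will therefore argue with the bottom coordinates instead. Each bottom series $b_k$ is nonzero on $\ker\bA$, so it is a positive multiple of an anchor, and different anchors vary independently on $\ran\bM$. The subspace $\ker\bA$ has dimension $m$. Because every coordinate is a multiple of an anchor, the number of distinct anchors is at least the dimension, and it is also at most the dimension, so there are exactly $m$ anchors.

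The total is then a positive multiple of a single anchor. That anchor is itself a positive multiple of exactly one coordinate function among these $m$ independent ones. But on $\ker\bA$, with $m\ge2$, the total $\sum_kb_k$ depends on all $m$ bottom series. It cannot be a multiple of one anchor, since the bottom series are themselves multiples of anchors and are linearly independent. This is the contradiction, and it is exactly where $m\ge2$ is needed.

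For the final claim, $T_{\bW}$ in \eqref{eq:TW} has an idempotent $\bM$ with $\ran\bM=\ker\bA$, as noted after \eqref{eq:TW}. The corollary therefore applies to every positive definite $\bW$, and in particular to OLS, WLS and MinT.
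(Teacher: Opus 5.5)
Your proof is correct and uses the same reduction as the paper. Since $\bM$ is the identity on $\ran\bM=\ker\bA$, the row condition of Theorem~\ref{thm:commute} would make every coordinate on $\ker\bA$ either zero or a positive multiple of an anchor coordinate. Once that is ruled out, the necessity direction of the theorem supplies the diagonal Gaussian witness.

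The difference lies in the contradiction step. The paper uses only the total's row, $y_{\text{tot}}=c\,y_{\pi(\text{tot})}$ on $\ker\bA$, and declares this incompatible with $y_{\text{tot}}=\sum_j y_j$. As you observe, that step is vacuous when $\pi(\text{tot})=\text{tot}$, since idempotence then forces $c=1$. It is equally vacuous when the anchor is an aggregate that coincides with the total on $\ker\bA$. The paper's one-liner therefore implicitly assumes that the total's anchor is a bottom series or a proper sub-aggregate. Your bottom-row argument covers these cases, so it holds for every idempotent $\bM$ with the stated range.

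Write your last step out explicitly; it is simpler than your count suggests. Let $a_j$ be the coordinate functional $\by\mapsto y_j$ restricted to $\ker\bA$. The anchor functionals are linearly independent, because anchor values vary freely on $\ran\bM$. Each bottom coordinate satisfies $b_k=c_k\,a_{\pi(b_k)}$ with $c_k>0$, and linear independence of the bottoms forces the $\pi(b_k)$ to be distinct. Hence $y_{\text{tot}}=\sum_k c_k\,a_{\pi(b_k)}$ has $m\ge2$ nonzero coefficients in the independent family $(a_j)$. It therefore cannot equal the single-anchor expression $c_{\text{tot}}\,a_{\pi(\text{tot})}$. You do not need the fact that the number of anchors equals $m$.
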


\begin{proof}
Idempotency with $\ran\bM=\ker\bA$ makes $\bM$ the identity on $\ker\bA$. If
each row of $\bM$ had at most one nonzero entry, the aggregate coordinate would satisfy
$y_{\text{tot}}=c\,y_{\pi(\text{tot})}$ for all $\by\in\ker\bA$, contradicting
$y_{\text{tot}}=\sum_{j}y_j$ with $m\ge2$ free coordinates. Apply
Theorem~\ref{thm:commute}.
\end{proof}

For Gaussian distributions the two sides of \eqref{eq:commute} agree at the
median because each marginal median is its mean; the identity can still
fail at $\alpha=1/2$ for non-Gaussian distributions.

\subsection{The quantile aggregation gap is the diversification benefit}

Corollary~\ref{cor:mint} shows that the two orders cannot agree uniformly
over Gaussian base distributions for a genuine hierarchy. The next result
quantifies the gap for a given Gaussian distribution and identifies the
comonotonic endpoint characterised in the following subsection.

\begin{proposition}[Gaussian quantile gap and comonotonic additivity]\label{prop:gap}
Let $T(\by)=\bM\by+\bd$.
\begin{enumerate}[label=(\roman*),leftmargin=2.2em,itemsep=1pt]
\item If $P=N_n(\bmu,\bSigma)$ and
$\bsigma=(\sqrt{\bSigma_{11}},\ldots,\sqrt{\bSigma_{nn}})\T$, then
\begin{equation}\label{eq:gap}
T(q_\alpha(P))-q_\alpha(T_{\pf}P)
=z_\alpha\left\{\bM\bsigma-
\sqrt{\diag(\bM\bSigma\bM\T)}\right\},
\end{equation}
where the square root is componentwise.
\item If $\bM$ has nonnegative entries, coordinate $i$ of
\eqref{eq:gap} is
\[
z_\alpha\left\{\sum_j\bM_{ij}\sigma_j-
\left(\sum_{j,k}\bM_{ij}\bM_{ik}\bSigma_{jk}\right)^{1/2}\right\}.
\]
For $\alpha>1/2$ this quantity is nonnegative. It vanishes exactly when
the nonzero centred coordinates receiving positive weight are positive
scalar multiples of a common element of $L^2$; if $\bSigma$ is positive definite, this is equivalent
to row $i$ having at most one nonzero entry.
\item If $P$ is comonotonic and $\bM$ has nonnegative entries,
\eqref{eq:commute} holds exactly.
\end{enumerate}
\end{proposition}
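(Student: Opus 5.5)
The three parts will be handled in order, using only Gaussian marginal formulas and the quantile behaviour of comonotonic sums.

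\textbf{Part (i).} Under $P=N_n(\bmu,\bSigma)$, each marginal satisfies $Q_{Y_j}(\alpha)=\mu_j+z_\alpha\sigma_j$. Hence $q_\alpha(P)=\bmu+z_\alpha\bsigma$, and applying the affine map gives
\[
T(q_\alpha(P))=\bM\bmu+\bd+z_\alpha\bM\bsigma .
\]
The pushforward is $T_{\pf}P=N_n(\bM\bmu+\bd,\bM\bSigma\bM\T)$, so its $i$th marginal quantile is $(\bM\bmu+\bd)_i+z_\alpha(\bM\bSigma\bM\T)_{ii}^{1/2}$. A degenerate marginal with zero variance has quantile equal to its mean, which the formula already covers. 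Subtracting the two expressions gives \eqref{eq:gap}.

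\textbf{Part (ii).} Expanding $(\bM\bSigma\bM\T)_{ii}=\sum_{j,k}\bM_{ij}\bM_{ik}\bSigma_{jk}$ gives the displayed coordinate formula. For the sign, write $X_j=Y_j-\mu_j\in L^2$. Then
\[
\Bigl(\sum_{j,k}\bM_{ij}\bM_{ik}\bSigma_{jk}\Bigr)^{1/2}=\Bigl\lVert\sum_j\bM_{ij}X_j\Bigr\rVert_{L^2}\le\sum_j\bM_{ij}\lVert X_j\rVert_{L^2}=\sum_j\bM_{ij}\sigma_j
\]
by Minkowski's inequality, and the weights are nonnegative. Since $z_\alpha>0$ for $\alpha>1/2$, the gap is nonnegative.

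Equality characterisation is the main obstacle. Equality in the triangle inequality for a finite sum in a Hilbert space holds exactly when all nonzero summands $\bM_{ij}X_j$ with $\bM_{ij}>0$ are nonnegative multiples of a common vector. Because these are nonzero and the weights are positive, this is the same as the nonzero centred coordinates being positive scalar multiples of one element. I would prove this by induction from the two-term case, using the Cauchy--Schwarz equality condition.

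Suppose now that $\bSigma$ is positive definite. Then no $X_j$ is zero, and no two distinct $X_j,X_k$ are proportional, since proportionality would put a nonzero vector in the kernel of $\bSigma$. So equality forces at most one positive entry in row $i$. Conversely, at most one nonzero entry trivially gives equality.

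\textbf{Part (iii).} If $P$ is comonotonic, there exist a uniform random variable $U$ and nondecreasing functions with $Y_j=Q_{Y_j}(U)$ almost surely. Then $T(\bY)_i=\sum_j\bM_{ij}Q_{Y_j}(U)+d_i$ is a nondecreasing, left-continuous function $g_i$ of $U$, because the coefficients are nonnegative. The lower quantile of a nondecreasing left-continuous transform of a uniform variable equals the transform evaluated at $\alpha$, as in \citet{dhaene2002}. Hence $Q_{T(\bY)_i}(\alpha)=\sum_j\bM_{ij}Q_{Y_j}(\alpha)+d_i$, which is \eqref{eq:commute}.

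The only care needed here is the lower-quantile convention and left-continuity. Each $Q_{Y_j}$ is left-continuous, so is $g_i$, and for such $g$ the identity $Q_{g(U)}(\alpha)=g(\alpha)$ holds for every $\alpha\in(0,1)$.
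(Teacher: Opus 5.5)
Your proposal is correct and follows the paper's proof essentially step for step: the Gaussian marginal quantile formula for (i), Minkowski's inequality in $L^2$ with the triangle-equality condition and a positive-definiteness argument for (ii), and the common-uniform quantile representation of a comonotonic vector for (iii). Your added remarks on degenerate marginals and on the left-continuity of $\sum_j \bM_{ij}Q_{Y_j}$ supply details the paper leaves implicit; the latter is what makes $Q_{g(U)}(\alpha)=g(\alpha)$ valid under the lower-quantile convention.
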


Equation~\eqref{eq:gap} applies directly to correlated Gaussian forecasts
and reconciliation maps with signed weights. Under nonnegative aggregation,
part (ii) identifies the bracket as the difference between the sum of
component standard deviations and the standard deviation of the sum---the
diversification benefit of the aggregate. Positive definiteness makes this
gap strict for every genuine aggregation row. Reconciling published upper
marginal quantiles then overstates the aggregate quantile by $z_\alpha$ times
that diversification benefit. Part (iii) closes the gap at the comonotonic
extreme, where
nonnegative sums preserve the common rank ordering. The results bracket
the cases in which the two orders agree: agreement for every distribution
requires each row to have at most one positive nonzero entry, whereas a
comonotonic distribution permits any nonnegative map.

\subsection{Enforcing additive marginal quantiles}\label{sec:como}

The preceding results show that reconciliation and marginal-quantile
extraction generally do not commute. Requiring the published quantiles to
add forces the lower path in Figure~\ref{fig:square} to agree arithmetically
with the aggregation relation. We now ask when the true marginal quantiles
can satisfy this requirement at every probability level. Let
$Y_{+}=\sum_{i=1}^{m}Y_i$ and consider
\begin{equation}\label{eq:alllevel}
Q_{Y_{+}}(u)=\sum_{i=1}^{m}Q_{Y_i}(u),\qquad 0<u<1 .
\end{equation}
Probabilistic reconciliation can preserve general dependence through empirical
copulas and sample reordering, as in \citet{bentaieb2021}. There, each marginal
sample receives the ranks specified by the estimated copula. Pairing all
marginals at one common rank instead gives the comonotonic construction.
Ranked-sample reconciliation \citep{jeon2019} gives a finite empirical version
of the common-rank construction by sorting marginal samples before reconciling
each rank-matched vector. The next proposition states the corresponding
population implication when additivity is required at every probability level.

\begin{proposition}[Comonotonicity characterisation]\label{prop:como}
Let $Y_1,\dots,Y_m$ be integrable with continuous marginal distribution
functions, strictly increasing on their supports. Then
\[
\eqref{eq:alllevel}\ \text{holds for every }u\in(0,1)
\quad\Longleftrightarrow\quad
\exists\,U\sim\mathrm{Unif}(0,1):\
Y_i=Q_{Y_i}(U)\ \text{a.s. for every }i.
\]
\end{proposition}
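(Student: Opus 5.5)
The easy direction is ($\Leftarrow$). If $Y_i=Q_{Y_i}(U)$ for a common uniform $U$, then $Y_+=g(U)$ with $g(u)=\sum_i Q_{Y_i}(u)$. The function $g$ is nondecreasing and left-continuous, since each lower quantile function is. For a nondecreasing left-continuous $g$ and $U$ uniform, the lower quantile of $g(U)$ at level $u$ equals $g(u)$. So $Q_{Y_+}(u)=\sum_i Q_{Y_i}(u)$ for every $u\in(0,1)$, which is the classical comonotonic additivity of quantiles \citep{dhaene2002}. This direction needs neither continuity of the marginals nor integrability.

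For ($\Rightarrow$) I would argue through expectations, using integrability. The identity \eqref{eq:alllevel} at every level implies that $Y_+$ has the same law as the comonotonic sum $S^c=\sum_i Q_{Y_i}(V)$ for any uniform $V$. Indeed, by the first direction the quantile function of $S^c$ is $\sum_i Q_{Y_i}$, and a law is determined by its quantile function. I then apply this equality in law to the stop-loss transform. Fix a level $u$ and set $d=Q_{Y_+}(u)=\sum_i d_i$ with $d_i=Q_{Y_i}(u)$. The pointwise inequality
\[
\Bigl(\sum_i Y_i-d\Bigr)_+\le\sum_i (Y_i-d_i)_+
\]
holds always. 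Comonotonicity makes the two sides equal in expectation, because $\E(S^c-d)_+=\sum_i\E(Q_{Y_i}(V)-d_i)_+$; this holds since all $Q_{Y_i}(V)-d_i$ share the sign of $V-u$ up to null sets. Equality in law then gives $\E(Y_+-d)_+=\sum_i\E(Y_i-d_i)_+$. A nonnegative difference with zero expectation vanishes almost surely, so the pointwise inequality is an equality a.s. Equality in $(\sum x_i)_+\le\sum (x_i)_+$ forces all $x_i$ to share a sign, with zeros allowed. Hence, for each $u$, almost surely the events $\{Y_i>d_i(u)\}$ and $\{Y_j<d_j(u)\}$ do not occur together.

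Next I pass from a single level to all levels simultaneously. Restricting to rational $u$ gives a single null set off which, for all rational $u$ and all $i,j$, $Y_i>Q_{Y_i}(u)$ implies $Y_j\ge Q_{Y_j}(u)$. Define $U_i=F_{Y_i}(Y_i)$. Because $F_{Y_i}$ is continuous, each $U_i$ is uniform. Because $F_{Y_i}$ is strictly increasing on the support, $Q_{Y_i}(U_i)=Y_i$ a.s. and $\{Y_i>Q_{Y_i}(u)\}=\{U_i>u\}$ up to null sets. The sign condition therefore becomes: $U_i>u$ implies $U_j\ge u$ for every rational $u$. If $U_i>U_j$, a rational $u$ strictly between them would violate this, so $U_i\le U_j$ a.s., and by symmetry $U_i=U_j$ a.s. Setting $U=U_1$ then gives $Y_i=Q_{Y_i}(U)$ a.s. for every $i$.

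The main obstacle is this last step: turning the equality of expectations at each level into almost-sure ordering, and handling the null sets and boundary cases at ties $Y_i=d_i$ carefully. This is exactly where continuity and strict monotonicity of the marginals enter, because they make $F_{Y_i}(Y_i)$ uniform and let the quantile transform be inverted. Integrability is needed only so that the stop-loss expectations are finite and the subtraction argument is valid.
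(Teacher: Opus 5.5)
Your proof is correct, but in the necessity direction it takes a different route from the paper. For sufficiency, the paper cites the comonotonic additivity of quantiles from Dhaene et al.; this is exactly your left-continuity argument. For necessity, the paper makes the same first observation you do, that \eqref{eq:alllevel} forces $Y_+$ to have the law of the comonotonic sum. It then simply invokes Cheung (2010), which characterises a comonotonic vector by the distribution of its sum.

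You instead prove that characterisation directly in the continuous-marginal case, in four steps:
\begin{enumerate}
\item equality of stop-loss premiums at $d=Q_{Y_+}(u)$;
\item the pointwise inequality $(\sum_i x_i)_+\le\sum_i (x_i)_+$, which converts zero expected slack into almost-sure sign agreement;
\item a countable grid of rational levels;
\item the probability integral transform $U_i=F_{Y_i}(Y_i)$.
\end{enumerate}

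The citation buys brevity and defers the hypotheses to a published result. Your argument is self-contained and shows which hypotheses do the work. Integrability only makes the stop-loss premiums finite, so the subtraction is legitimate. Continuity drives the last step by making each $U_i$ uniform and giving $F_{Y_i}(Q_{Y_i}(u))=u$ in the event identities. Strict monotonicity of $F_{Y_i}$ is in fact never used, because $Q_{Y_i}(F_{Y_i}(Y_i))=Y_i$ almost surely holds for every distribution. Your closing remark therefore slightly overstates its role.
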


Sufficiency is the standard quantile representation \citep{dhaene2002};
necessity follows because \eqref{eq:alllevel} equates the distribution of the sum
with that of the comonotonic sum, and \citet{cheung2010} characterises the
vector from the distribution of its sum. Thus all-level additivity of the
true quantiles characterises comonotonicity. Under this dependence structure,
quantile extraction commutes with every nonnegative linear map, as stated in
Proposition~\ref{prop:gap}(iii).

Under the same regularity conditions, a noncomonotonic predictive
distribution cannot have all of its true marginal quantile curves satisfy
\eqref{eq:alllevel}. Any additive set of reported quantile curves must
therefore differ from a true marginal quantile for at least one series and
probability level. Strict consistency of the pinball loss then makes its
expected marginal loss strictly larger at that series and probability level
\citep{gneitingraftery2007}. The resulting trade-off is between arithmetic
consistency of the published quantiles and proper marginal evaluation;
calling both properties ``coherence'' conceals it.

This distinction also separates two approaches to quantile-focused
reconciliation. The penalty in \citet{han2021} trades quantile accuracy against
aggregation inconsistency. \citet{honguyen2026} retain a coherent joint belief
distribution and select its reconciliation map under pinball loss; its true
marginal quantiles may remain nonadditive.

Additive quantiles are not uniformly conservative. If $Y_1,Y_2$ are
independent Bernoulli with success probability $0.04$, both $0.95$
quantiles are zero while the $0.95$ quantile of the sum is one; an
arbitrarily small continuous perturbation preserves the strict inequality.
This is the familiar failure of subadditivity behind value-at-risk not
being a coherent risk measure \citep{artzner1999}. Bounds on the quantile
of a sum under unspecified dependence \citep{embrechts2013} give the
achievable range and show that the comonotonic sum is not the worst case,
so the sign of a quantile-coherence adjustment is problem-dependent. For
the gamma marginals in Table~\ref{tab:six}, the sharp dependence bounds at
probability level $0.95$ are
\[
114.01\le Q_{Y_1+Y_2}(0.95)\le166.41.
\]
Independence gives $140.72$, whereas the comonotonic sum of marginal
quantiles is $154.88$. The latter lies $11.54$ below the attainable upper
bound, making the dependence effect quantitative in the running example.

Finally, \eqref{eq:alllevel} on a finite grid of probability levels does
not characterise a copula. It still alters marginal reports and should be
assessed at the probability levels on which it is imposed.

Together, the mean and quantile results complete the report-level analysis.
Report-level commutation does not settle the next transition: a coherent mean
report may still differ from the action selected under operational loss.

\section{From a predictive distribution to a constrained action}
\label{sec:decision}

Decision alignment depends on the point functional, evaluation criterion and
constrained objective declared for the task
\citep{kolassa2023,athanasopouloskourentzes2023,west2024,honguyen2026}. A
constrained action combines the predictive distribution with a loss and a
feasible set. This section characterises when the predictive mean suffices to
recover that action and when the rest of the distribution matters. Outside
mean sufficiency, using a reconciled point forecast as an allocation substitutes
a report for the Bayes decision.

Let the constrained Bayes action under $P$ be
\begin{equation}\label{eq:bayes}
\ba^{\star}(P)=\arg\min_{\bA\ba=\bb}\ \E_P\bigl(L(\bY,\ba)\bigr),
\end{equation}
assumed unique. Call a rule \emph{mean-based} if it has the form
$\Psi(\E_P(\bY))$ for a fixed $\Psi$ not depending on $P$;
$T_{\bW}(\E_P\bY)$ is
the leading example. A loss is strictly consistent for the mean when its
unique unconstrained Bayes action is the predictive mean. If two
distributions with the same mean have expected-risk functions that differ
only by an action-independent constant, every constrained Bayes problem can
be solved from the mean.

The Bregman risk decomposition of \citet{banerjee2005} yields the following
constrained decision rule and its first-order system.

\begin{corollary}[Constrained decisions under Bregman loss]
\label{thm:bregman}
Let $\mathcal D\subset\R^n$ be open and convex, let
$\varphi:\mathcal D\to\R$ be twice continuously differentiable and strictly convex,
and let $h:\mathcal D\to\R$ be measurable. For $\by,\ba\in\mathcal D$, define
\[
L(\by,\ba)=D_{\varphi}(\by,\ba)+h(\by)
=\varphi(\by)-\varphi(\ba)
-\langle\nabla\varphi(\ba),\by-\ba\rangle+h(\by).
\]
Let $P$ satisfy $P(\bY\in\mathcal D)=1$,
$\E_P\lVert\bY\rVert<\infty$, $\bmu=\E_P(\bY)\in\mathcal D$, and
$\E_P|\varphi(\bY)+h(\bY)|<\infty$. Let
$\bA\in\R^{r\times n}$ have full row rank and suppose
$\mathcal F=\{\ba\in\mathcal D:\bA\ba=\bb\}$ is nonempty. Then, for every $\ba\in\mathcal D$,
\[
\E_P L(\bY,\ba)=\E_P\{\varphi(\bY)+h(\bY)\}
-\varphi(\ba)-\langle\nabla\varphi(\ba),\bmu-\ba\rangle .
\]
Thus risk differences, and hence the set of minimisers over $\mathcal F$, depend on
$P$ only through $\bmu$. Every constrained Bayes action
$\ba^{\star}\in\mathcal F$ satisfies, for some $\blambda\in\R^r$,
\[
\nabla^2\varphi(\ba^{\star})(\bmu-\ba^{\star})
=\bA\T\blambda,
\qquad \bA\ba^{\star}=\bb.
\]
If $\mathcal D=\R^n$ and
$\varphi(\by)=\tfrac12\by\T\bW^{-1}\by$ for positive definite $\bW$, the
constrained Bayes action is unique and equals
$\ba^{\star}(P)=T_{\bW}(\bmu)$.
\end{corollary}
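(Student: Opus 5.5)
The plan is to prove the risk identity first, by taking expectations of the Bregman loss term by term, and then read off everything else from it.

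\textbf{Step 1: the risk identity.} For fixed $\ba\in\mathcal D$, the loss $L(\bY,\ba)$ is the sum of three pieces:
\begin{itemize}
\item $\varphi(\bY)+h(\bY)$, which is integrable by assumption;
\item the constant $-\varphi(\ba)$;
\item the term $-\langle\nabla\varphi(\ba),\bY-\ba\rangle$, which is affine in $\bY$ and integrable because $\E_P\lVert\bY\rVert<\infty$.
\end{itemize}
Linearity of expectation then gives
\[
\E_P L(\bY,\ba)=\E_P\{\varphi(\bY)+h(\bY)\}-\varphi(\ba)-\langle\nabla\varphi(\ba),\bmu-\ba\rangle.
\]
Equivalently, this equals $\E_P\{\varphi(\bY)+h(\bY)\}-\varphi(\bmu)+D_\varphi(\bmu,\ba)$, which is the decomposition of \citet{banerjee2005}. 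The first summand does not depend on $\ba$. So risk differences between any two actions, and hence the argmin over $\mathcal F$, depend on $P$ only through $\bmu$. In particular, minimising risk over $\mathcal F$ is the same as minimising $g(\ba)=D_\varphi(\bmu,\ba)$ over $\mathcal F$.

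\textbf{Step 2: the first-order system.} The objective $g$ is $C^1$ on the open set $\mathcal D$, because $\varphi\in C^2$. Its gradient is
\[
\nabla g(\ba)=-\nabla\varphi(\ba)-\nabla^2\varphi(\ba)(\bmu-\ba)+\nabla\varphi(\ba)=-\nabla^2\varphi(\ba)(\bmu-\ba).
\]
The feasible set $\mathcal F$ is the intersection of $\mathcal D$ with an affine subspace, and the constraints are linear, so no constraint qualification is needed. A minimiser $\ba^\star$ is interior to $\mathcal D$, so $\nabla g(\ba^\star)$ must be orthogonal to $\ker\bA$, that is, it must lie in $\ran\bA\T$. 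This gives
\[
\nabla^2\varphi(\ba^\star)(\bmu-\ba^\star)=\bA\T\blambda
\]
for some $\blambda\in\R^r$, after absorbing the sign into $\blambda$. Feasibility gives $\bA\ba^\star=\bb$.

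\textbf{Step 3: the quadratic case.} With $\varphi(\by)=\tfrac12\by\T\bW^{-1}\by$ on $\R^n$, the divergence is $D_\varphi(\bmu,\ba)=\tfrac12(\bmu-\ba)\T\bW^{-1}(\bmu-\ba)$. This is strictly convex and coercive in $\ba$, so its minimum over the nonempty affine set $\mathcal F$ exists and is unique. The first-order system becomes $\bW^{-1}(\bmu-\ba^\star)=\bA\T\blambda$, so $\ba^\star=\bmu-\bW\bA\T\blambda$. Substituting into $\bA\ba^\star=\bb$ gives $\bA\bW\bA\T\blambda=\bA\bmu-\bb$. The matrix $\bA\bW\bA\T$ is invertible because $\bA$ has full row rank and $\bW$ is positive definite. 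Solving for $\blambda$ yields $\ba^\star=T_{\bW}(\bmu)$ as in \eqref{eq:TW}.

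\textbf{Main obstacle.} The only delicate point is Step 2 in the general case. Here $g$ need not be convex in $\ba$, so the statement is only a necessary condition, which is exactly what the corollary claims. It also relies on $\mathcal D$ being open, so that a minimiser over $\mathcal F$ is a local minimiser on the affine slice with no boundary effects. The integrability bookkeeping in Step 1 is routine once $\varphi(\bY)+h(\bY)$ is treated as a single integrable term rather than as two separate pieces.
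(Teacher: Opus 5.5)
Your proposal is correct and follows essentially the same route as the paper. Both expand the Bregman loss and take expectations termwise, differentiate the action-dependent part to get $-\nabla^2\varphi(\ba)(\bmu-\ba)$, and invoke the multiplier condition on the open set $\mathcal D$; in the quadratic case both solve the resulting linear system and use strict convexity for uniqueness. Your coercivity argument for existence and your justification of the multiplier via orthogonality to $\ker\bA$ spell out steps the paper leaves implicit.
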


Equivalently, the expected loss is $D_\varphi(\bmu,\ba)$ plus a term
independent of $\ba$. Thus the constrained problem minimises this divergence
from the mean over the feasible set. Classical representation results give
conditions under which losses strictly consistent for the mean have Bregman
form \citep{banerjee2005,gneiting2011,abernethy2012}.

Weighted reconciliation of the predictive mean is therefore the
constrained Bayes rule for a specific loss. The next result shows when a
mean-based rule cannot recover the constrained decision.

\begin{proposition}[No-go for mean-based reconciliation]\label{prop:nogo}
Let $\cP_0$ be a class of predictive distributions, and let
$P,Q\in\cP_0$ have the same finite mean and different unique constrained
Bayes actions. Then
\[
\begin{gathered}
\E_P(\bY)=\E_Q(\bY),\qquad
\ba^{\star}(P)\ne\ba^{\star}(Q)\\
\Longrightarrow\quad
\nexists\,\Psi\ \text{such that}\
\Psi(\E_R\bY)=\ba^{\star}(R)
\quad\text{for every }R\in\cP_0.
\end{gathered}
\]
In particular, $T_{\bW}(\E_P\bY)$ cannot recover the constrained Bayes
action throughout $\cP_0$ for any fixed positive definite $\bW$. For the
inventory loss \eqref{eq:invloss}, the premise holds within the independent
product family $Y_i\sim\mathrm{Gamma}(k_i,\mu_i/k_i)$ with fixed means.
\end{proposition}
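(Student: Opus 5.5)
The abstract implication is a one-line argument by contradiction. Suppose some $\Psi$ satisfied $\Psi(\E_R\bY)=\ba^{\star}(R)$ for every $R\in\cP_0$. Apply it to $R=P$ and $R=Q$. Since $\E_P(\bY)=\E_Q(\bY)$, this gives $\ba^{\star}(P)=\Psi(\E_P\bY)=\Psi(\E_Q\bY)=\ba^{\star}(Q)$, contradicting the premise. The statement about $T_{\bW}$ is the special case $\Psi=T_{\bW}$, which is a fixed function of the mean for each fixed $\bW$. No property of $\bW$ is used beyond its not depending on $R$.

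The substantive part is to verify the premise for the inventory loss \eqref{eq:invloss} inside the family $Y_i\sim\mathrm{Gamma}(k_i,\mu_i/k_i)$ with $\bmu$ fixed. I would use the two-store case with the constraint $a_1+a_2=\tgt$, as in Table~\ref{tab:six}. For an independent product, the expected loss is separable: $\sum_i\ell_i(a_i)$, where each $\ell_i$ is the convex newsvendor risk with derivative $\ell_i'(a)=(c^o_i+c^u_i)F_i(a)-c^u_i$. With continuous marginals of full support on $(0,\infty)$, each $\ell_i$ is strictly convex, so the constrained minimiser is unique. Eliminating $a_2=\tgt-a_1$, it is characterised by
\[
(c^o_1+c^u_1)F_1(a_1)-c^u_1=(c^o_2+c^u_2)F_2(\tgt-a_1)-c^u_2 .
\]
The left side is strictly increasing in $a_1$ and the right side is strictly decreasing. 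Hence there is a unique root whenever the boundary signs are right, and uniqueness plus existence can be checked directly on $(0,\tgt)$.

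To exhibit two members with different actions, keep $\bmu=(40,60)\T$ and Store~2's shape $k_2=36$ fixed, and vary only $k_1$. Take $P$ with $k_1=4$, the distribution of Table~\ref{tab:six}, whose action has first coordinate $51.11$. For $Q$, take a shape with $k_1\to\infty$, or a concrete large value. Then $Y_1$ concentrates at $40$, so $F_1$ approaches a step at $40$. The root then moves near $a_1\approx 40$, because the left side jumps from $-9$ to $1$ there while the right side is bounded. Since $51.11\neq 40$, choosing $k_1$ large enough makes the two roots differ by continuity of the root in $k_1$.

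A cleaner alternative is to show that the left side, evaluated at the fixed point $a_1=51.11$, changes strictly with $k_1$. Gamma laws with a common mean are not stochastically ordered, but their CDFs at a fixed point above the mean vary with the shape. The main obstacle is making this rigorous without numerics. I would rely on the limiting argument, which needs only weak convergence of $\mathrm{Gamma}(k,40/k)$ to $\delta_{40}$, together with strict monotonicity of the first-order condition, and thereby avoids computing any gamma quantile exactly.
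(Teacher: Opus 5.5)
Your proof is correct. The abstract step, evaluating $\Psi$ at the common mean of $P$ and $Q$, is exactly the paper's; the routes differ only in the gamma witness.

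The paper swaps the shape vectors, using $(4,36)$ versus $(36,4)$ with means $(40,60)\T$. It solves $R_1'(a_1)=R_2'(85-a_1)$ numerically for both and reports the explicit actions $(51.114,33.886)\T$ and $(45.912,39.088)\T$, with uniqueness from strict monotonicity of the distribution functions.

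You instead fix Store~2 and let $k_1\to\infty$, so that $\mathrm{Gamma}(k_1,40/k_1)\Rightarrow\delta_{40}$ and the monotone first-order condition traps the root near $40$. This replaces the second numerical root computation with a qualitative limit argument. The cost is a less explicit $Q$: some sufficiently large $k_1$ rather than a stated pair of shapes.

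Two points of phrasing. What you actually use is convergence of the root to $40$, which your bracketing delivers, not continuity of the root in $k_1$. The property you need of the right side is that it lies in $[-1.5,1)$, strictly inside the range $(-9,1)$ of the left side's jump, not merely that it is bounded.

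Your route can also drop its reliance on the tabulated value $51.11$. For $k_1=4$, $F_1(45)=1-e^{-4.5}\sum_{j=0}^{3}4.5^j/j!\approx0.658<0.75$, so $R_1'(45)<-1.5\le R_2'(40)$ whatever $F_2$ is, which forces $a_1^{\star}(P)>45$. For large $k_1$, your bracketing at $a_1=35$ and $a_1=45$ gives $a_1^{\star}(Q)\in(35,45)$, so the two actions differ.
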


A concrete witness fixes the means at $(40,60)\T$ and uses shape vectors
$(4,36)$ and $(36,4)$. Under $a_1+a_2=85$ and \eqref{eq:invloss}, the two
product distributions have unique Bayes actions $(51.114,33.886)\T$ and
$(45.912,39.088)\T$, respectively. Their marginal quantile functions, not
their common mean, determine the different allocations.

Two consequences follow. First, a non-quadratic loss can still yield
a constrained decision determined by the mean, since all Bregman losses
are strictly consistent for the mean
\citep{bregman1967,gneiting2011}; the relevant boundary is whether the
constrained action uses distributional information beyond the mean, as
under asymmetric absolute (quantile) and asymmetric squared (expectile) losses
\citep{neweypowell1987,gneiting2011}. Second, the familiar matrix algebra of
MinT and of a quadratic Bayes action coincides when the precision matrix and
loss curvature are deliberately matched, as in Proposition~\ref{prop:gauss}:
MinT treats high-forecast-variance directions as less costly to adjust,
whereas a quadratic Bayes rule penalises high-loss-curvature directions more
heavily.

Value-oriented reconciliation lets a downstream criterion enter earlier
without collapsing the distinction. \citet{wenpinson2026} select a coherent
combination function by Nash bargaining over trading profits; the output
is still a reported forecast. \citet{honguyen2026} retain a multivariate
predictive distribution and select its reconciliation map under pinball loss;
Theorem~\ref{thm:commute} explains why the marginal quantiles of the
resulting coherent distribution need not add, and why that distributional input
cannot be replaced by reconciled marginal summaries.

Across the distribution, report and action layers, the preceding results
identify three distinct sources of disagreement: the observation rule needed
to define a constrained distribution, summary extraction that does not commute
with reconciliation, and a decision loss that uses distributional information
beyond the mean. The classical benchmark studied next hides these distinctions
at the point-summary level: distributional revision, mean reporting and matched
quadratic action all select the same point even though the underlying
forecasting objects remain distinct.

\section{Why the classical benchmark hides the distinctions}
\label{sec:gauss}

Here the classical benchmark combines affine constraint geometry, Gaussian
predictive uncertainty and a quadratic loss whose curvature is matched to
forecast covariance. Under
this alignment, conditional revision, weighted projection, mean reporting and
constrained decision all select $T(\bmu)$. A comparison restricted to this
common point cannot reveal which operation produced it and therefore hides the
preceding distinctions. Proposition~\ref{prop:gauss} records the precise
agreement. This agreement is point-specific: distinct distributional
revisions can share the same mean, and nonmedian marginal quantiles remain
noncommutative.

\begin{proposition}[Equivalence under affine--Gaussian--quadratic assumptions]\label{prop:gauss}
Let $\bY\sim N_n(\bmu,\bSigma)$ with $\bSigma$ positive definite,
$\bA$ of full row rank and $\cC=\{\by:\bA\by=\bb\}$. Define
\[
T(\by)=\by-\bSigma\bA\T(\bA\bSigma\bA\T)^{-1}(\bA\by-\bb).
\]
Then
\begin{enumerate}[label=(\roman*),leftmargin=2.2em,itemsep=1pt]
\item $T_{\pf}P$ equals the Gaussian conditional distribution
$P(\cdot\mid \bA\bY=\bb)$;
\item the $\bSigma$-weighted point projection, the conditional mean and the
mean of $T(\bY)$ all equal $T(\bmu)$;
\item under
$L(\by,\ba)=(\by-\ba)\T\bSigma^{-1}(\by-\ba)$ the constrained Bayes action
equals $T(\bmu)$;
\item among $q\ll P$ with finite $\KL(q\Vert P)$ and finite first moment,
the $\KL$-minimising distribution subject to $\E_q(\bA\bY)=\bb$ is Gaussian with mean
$T(\bmu)$ and covariance $\bSigma$; it shares the point summary but not the
degenerate conditional distribution.
\end{enumerate}
\end{proposition}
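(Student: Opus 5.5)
The plan is to verify the four parts in order. Each reduces to standard Gaussian or quadratic algebra once we write $\bK=\bSigma\bA\T(\bA\bSigma\bA\T)^{-1}$, so that $T(\by)=\by-\bK(\bA\by-\bb)$ and $T(\by)=\bM\by+\bK\bb$ with $\bM=I-\bK\bA$.

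For part (i), we decompose $\bY$ into $T(\bY)$ and $\bA\bY$. First, $\bA T(\by)=\bb$ for every $\by$, so $T(\bY)$ is supported on $\cC$. Second, $\cov(T(\bY),\bA\bY)=\bM\bSigma\bA\T=\bSigma\bA\T-\bSigma\bA\T(\bA\bSigma\bA\T)^{-1}\bA\bSigma\bA\T=0$. Because the pair $(T(\bY),\bA\bY)$ is jointly Gaussian, $T(\bY)$ is independent of $\bA\bY$. Using the identity $\bY=T(\bY)+\bK(\bA\bY-\bb)$, the regular conditional law of $\bY$ given $\bA\bY=\bb$ is therefore the law of $T(\bY)$. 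This is the canonical version of the conditional: the affine residual has constant co-area Jacobian, as noted after Proposition~\ref{prop:tube}, so there is no Borel--Kolmogorov ambiguity. As a check, the resulting mean $\bmu-\bK(\bA\bmu-\bb)$ and covariance $\bSigma-\bK\bA\bSigma$ match the usual Gaussian conditioning formulae.

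Part (ii) then follows directly. The conditional mean is $\E T(\bY)=T(\bmu)$ by linearity of expectation. The $\bSigma^{-1}$-weighted projection of $\bmu$ onto $\cC$ is $T_{\bSigma}(\bmu)=T(\bmu)$, which is the Lagrangian solution given in \eqref{eq:TW} with $\bW=\bSigma$. Part (iii) is the last clause of Corollary~\ref{thm:bregman} with $\varphi(\by)=\tfrac12\by\T\bSigma^{-1}\by$. The loss in part (iii) equals $2D_\varphi$, and a constant factor does not change the minimiser, so the corollary gives $\ba^\star=T_{\bSigma}(\bmu)=T(\bmu)$.

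Part (iv) is the step I expect to need the most care. I plan to use exponential tilting together with the Gibbs variational inequality. For $\blambda\in\R^r$, define $q_{\blambda}\propto\exp(\blambda\T\bA\by)\,dP$. Completing the square shows that $q_{\blambda}=N(\bmu+\bSigma\bA\T\blambda,\bSigma)$. Choosing $\blambda=-(\bA\bSigma\bA\T)^{-1}(\bA\bmu-\bb)$ gives mean $T(\bmu)$ and $\E_{q_{\blambda}}\bA\bY=\bb$. Now take any admissible $q$ with $\E_q\bA\bY=\bb$. Then
\[
\KL(q\Vert P)=\KL(q\Vert q_{\blambda})+\E_q\bigl(\log\tfrac{dq_{\blambda}}{dP}\bigr)
=\KL(q\Vert q_{\blambda})+\blambda\T\bb-\log\E_P e^{\blambda\T\bA\bY}.
\]
This is valid because $q$ has a finite first moment, which makes $\E_q\blambda\T\bA\bY$ finite and justifies the splitting. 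The last two terms do not depend on $q$. Hence $q_{\blambda}$ is the unique minimiser, since $\KL(q\Vert q_{\blambda})=0$ exactly when $q=q_{\blambda}$. Its covariance $\bSigma$ is nondegenerate, whereas by part (i) the conditional law has singular covariance $\bSigma-\bK\bA\bSigma$ supported on $\cC$. So the two laws differ while sharing the mean $T(\bmu)$. The main points to verify here are that $\KL(q\Vert P)$ is finite and that the decomposition is legitimate when $q$ lacks a density with respect to $q_{\blambda}$. Since $q\ll P$ and $P\sim q_{\blambda}$ are mutually absolutely continuous, this causes no difficulty.
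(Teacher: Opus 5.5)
Your proposal is correct and follows the paper's route: identify $T_{\pf}P$ with the Gaussian conditional law, invoke Corollary~\ref{thm:bregman} with $\bW=\bSigma$ for (iii), and tilt $P$ by $\exp(\blambda\T\bA\by)$ with $\blambda=(\bA\bSigma\bA\T)^{-1}(\bb-\bA\bmu)$ for (iv). The differences are refinements only: you derive the conditional law from the independence of $T(\bY)$ and $\bA\bY$ rather than quoting the Gaussian conditioning formulae and matching moments, and you prove the KL-projection step explicitly via $\KL(q\Vert P)=\KL(q\Vert q_{\blambda})+\blambda\T\bb-\log\E_P e^{\blambda\T\bA\bY}$, where the paper cites \citet{csiszar1975}.
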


The distributional identities in parts (i)--(ii) appear in Gaussian
probabilistic reconciliation
\citep{wickramasuriya2024,zambon2024efficient,zambon2024properties}. Part (iii)
combines the geometric weighted projection of \citet{panagiotelis2021} with
the constrained quadratic decision analysis of \citet{west2024}. Part (iv) is
the classical information-projection calculation of \citet{csiszar1975}.
Collected at a common point in the forecasting sequence, these identities
concern different transitions. Part (i) aligns conditional and transport-based
distributional revision once the Gaussian law and canonical affine residual
are specified; the constraint set alone still does not select that law. Part
(ii) removes the curvature displacement and recovers commutation for the mean
report. Part (iii) makes the projected mean a constrained Bayes action by
matching loss curvature to forecast covariance. Part (iv) marks the remaining
distributional boundary: an information projection can share the same mean
without sharing the conditional distribution's degenerate support.

The common point is what hides the distinctions in the classical benchmark. A
comparison based only on the mean or the matched quadratic action cannot
separate conditioning, projection, mean reporting and decision. Nonmedian
marginal quantiles lie outside this equivalence: the gap in
Section~\ref{sec:quantiles} remains at every nonmedian probability level, so
agreement of the point summaries does not make quantile additivity a Gaussian
identity.

\FloatBarrier

\section{Empirical evidence on quantile aggregation and target-dependent rankings}
\label{sec:empirical}

The empirical study starts from the affine--Gaussian part of the classical
benchmark and has two aims. The first exposes a report-level order effect:
it replaces the mean by a nonmedian marginal quantile and contrasts the two
paths in Figure~\ref{fig:square}. The second is a complementary evaluation
illustration: holding the predictive distribution and reconciliation map
fixed, it compares regional mean and $0.8$-quantile reports under target-specific
losses. No common capacity constraint is imposed, so this
comparison is not an empirical implementation of the constrained Bayes problem
in Section~\ref{sec:decision}. The rolling exercise therefore provides evidence
on quantile aggregation and an illustration of target-dependent evaluation,
without ranking forecasting methods by a single accuracy measure.

\subsection{Data and fixed forecast protocol}

We use the quarterly Australian domestic tourism data \texttt{tourism},
distributed with the R package \texttt{tsibble}. The series measure
thousands of overnight trips from 1998Q1 through 2017Q4
\citep{athanasopoulos2009,hyndmanathanasopoulos2021}. Tourism data are a
standard reconciliation benchmark \citep{athanasopoulos2024,bertani2025};
\citet{honguyen2026} also
study tourism under quantile-targeted reconciliation. Here the data serve a
different role: we hold the predictive law and reconciliation map fixed to
isolate the order of operations. Aggregating over travel purpose and
nesting Region within State gives the geographical
hierarchy used in the book example: $n=85$ series comprising
one national total, eight states and $m=76$ bottom-level regions.

We evaluate 28 one-quarter-ahead forecasts using an expanding window. The
forecast origins---the final quarters observed before the forecasts are
issued---run from 2010Q4 through 2017Q3, so the realised targets run from
2011Q1 through 2017Q4. At each origin an ETS model is fitted separately to
all 85 series
\citep{hyndmanathanasopoulos2021}.
Let $\widehat{\bmu}_o$ be the vector of base means and let
$\widehat{\bW}_o$ be the shrinkage estimate of the one-step base-error
covariance computed from information available at origin $o$. The estimate
uses response-scale one-step ETS residuals. This common scale is essential
because automatic ETS may select either additive- or multiplicative-error
models. The Schäfer--Strimmer estimator used by MinT shrinks the residual
correlation matrix toward the identity
\citep{schaeferstrimmer2005,wickramasuriya2019}. The resulting Gaussian working distribution is
$P_o=N_{85}(\widehat{\bmu}_o,\widehat{\bW}_o)$. Bottom-up and MinT-shrink
are applied as affine maps to the same $P_o$. At each origin, model fitting
and covariance estimation use only information available at that origin; no
specification is tuned using future outcomes or aggregate evaluation results.

Figure~\ref{fig:forecastpaths} makes the rolling forecast protocol visible at
three hierarchical levels. The displayed state and region are selected by
average trip volume in the initial training window ending in 2010Q4, before
any evaluation outcome is observed. This rule yields New South Wales and
Sydney. The figure is illustrative: all comparisons below use every one of
the 76 regions and all 28 forecast origins.

\begin{figure}[!ht]
\centering
\includegraphics[width=0.96\textwidth]{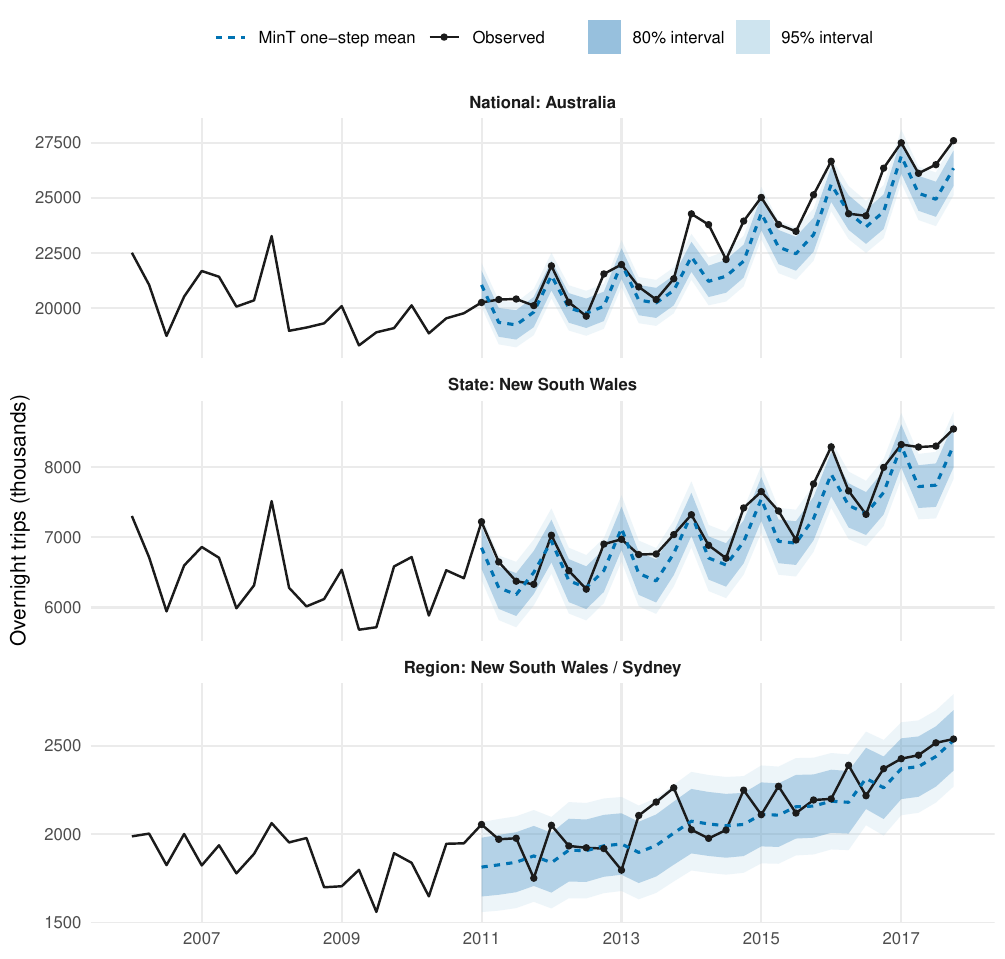}
\caption{Observed tourism sequences and rolling one-quarter-ahead
MinT-shrink forecasts. Solid lines with points are realised trips, dashed lines
are reconciled predictive means, and the shaded bands are pointwise Gaussian
$80\%$ and $95\%$ intervals. The pre-evaluation history starts in 2006Q1;
forecasts and intervals run from 2011Q1 through 2017Q4. Australia is the
national series, while the state and region panels are the largest by average
volume in the initial training window.}
\label{fig:forecastpaths}
\end{figure}

\subsection{Quantile aggregation: mechanism and magnitude}

The source of the quantile gap is visible before any loss is evaluated. Let
$\bM_o^{\mathrm{BU}}$ denote the bottom-up map at origin $o$. At
$\alpha=0.95$, Figure~\ref{fig:e2paths} compares the two orders,
\begin{equation}\label{eq:empiricalpaths}
\bM_o^{\mathrm{BU}}q_{0.95}(P_o)
\qquad\text{and}\qquad
q_{0.95}\bigl((\bM_o^{\mathrm{BU}})_\#P_o\bigr).
\end{equation}
For a nonnegative bottom-up row, Proposition~\ref{prop:gap}(ii) predicts that
the first route lies weakly above the second, with equality for a coordinate
row. The national and New South Wales rows aggregate multiple regional
coordinates: their gaps are respectively $3184.9$--$3275.7$ and
$630.7$--$651.6$ thousand trips over the 28 origins. Bottom-up leaves the
Sydney coordinate unchanged, and its two paths coincide exactly. Thus the
three panels display the aggregation mechanism that creates the quantile gap,
rather than only its average magnitude.

\begin{figure}[!ht]
\centering
\includegraphics[width=0.96\textwidth]{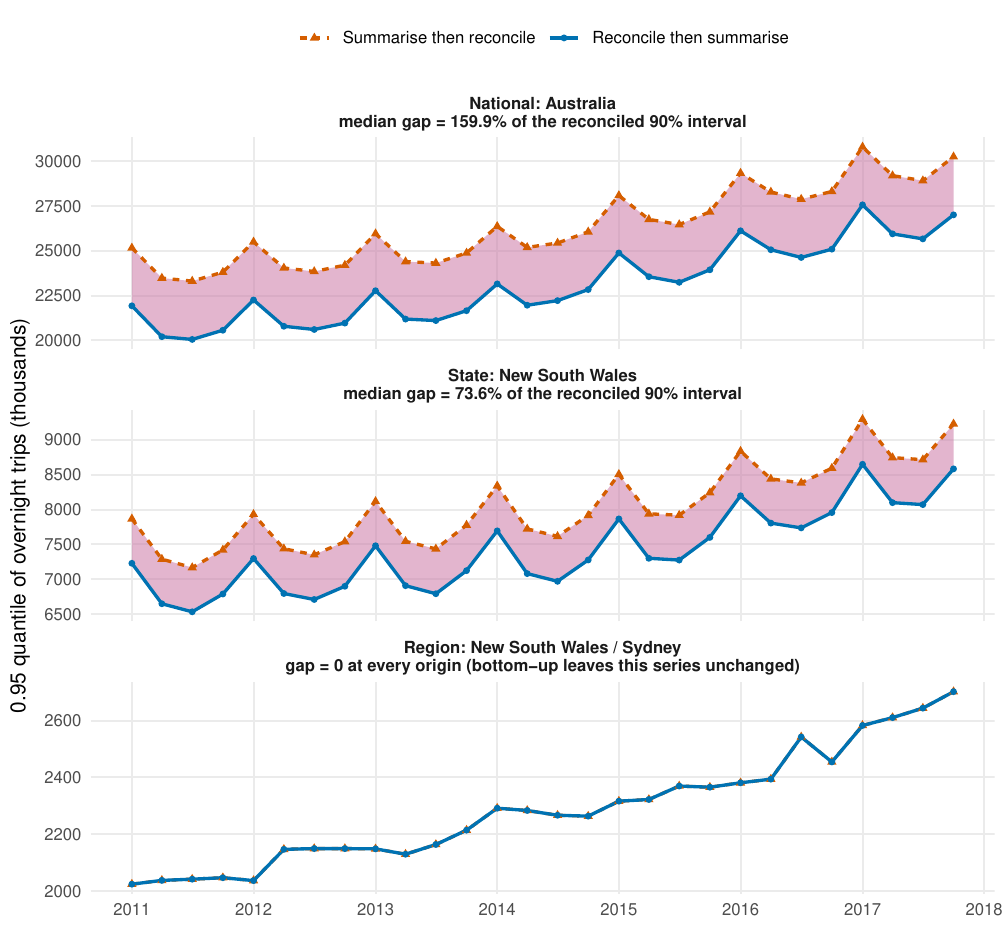}
\caption{Two ways to obtain bottom-up $0.95$ quantiles. The dashed
line first takes the base marginal quantiles and then reconciles them,
$\bM_o^{\mathrm{BU}}q_{0.95}(P_o)$; the solid line first reconciles the
predictive distribution and then takes its marginal quantiles,
$q_{0.95}((\bM_o^{\mathrm{BU}})_\#P_o)$. Shading shows the difference. The
aggregate series differ at every origin, whereas the two calculations agree
for the unchanged bottom-level series. The displayed nodes follow the
initial-training-volume rule used in Figure~\ref{fig:forecastpaths}.}
\label{fig:e2paths}
\end{figure}

At every origin and for
$\alpha\in\{0.05,0.50,0.95\}$, we also compare these two routes at all 85
nodes. The absolute difference is divided by the node's reconciled $90\%$
interval width and summarised by hierarchical level. Bottom-up has
nonnegative rows, so Equation~\eqref{eq:gap} gives the diversification gap
in Proposition~\ref{prop:gap}(ii). MinT has signed rows at every state and
regional node in this application; its commutator remains well defined, but
does not have the same sign interpretation.

Gaussian symmetry makes the absolute gaps at $\alpha=0.05$ and $0.95$
identical, while the median gap is zero. At either nonmedian tail, the
bottom-up median national gap is $159.88\%$ of the national interval width,
and the state median is $64.34\%$; the regional gap is zero because bottom-up
leaves the bottom series unchanged. For MinT-shrink, the corresponding
national, state and regional medians are $115.32\%$, $38.45\%$ and
$10.92\%$. An additive table of marginal quantiles can therefore differ from
the marginal quantiles of the coherent distribution by more than that
distribution's own predictive interval at the national level. The pathwise
plot and the all-node summary establish both the mechanism and its material
magnitude in this hierarchy.

\FloatBarrier

\subsection{Target-dependent evaluation of reported outputs}

Different losses elicit different functionals \citep{gneiting2011,kolassa2023},
and hierarchical evaluation can involve multiple decision objectives
\citep{athanasopouloskourentzes2023}. Holding the estimated predictive
distribution and reconciliation map fixed isolates the effect of the declared
target on the comparison. The exercise compares outputs constructed for
different reporting and evaluation tasks.

Let $\bM_o$ be the MinT-shrink map, and let subscript $b$ select the 76
regional coordinates. The same reconciled Gaussian distribution generates two
outputs:
\begin{equation}\label{eq:e1outputs}
\ba_o^{\mathrm{mean}}=(\bM_o\widehat{\bmu}_o)_b,
\qquad
\ba_o^{0.8}=\ba_o^{\mathrm{mean}}+z_{0.8}
\sqrt{\diag\{(\bM_o\widehat{\bW}_o\bM_o\T)_{bb}\}}.
\end{equation}
The first is the regional mean report. The second is the regional $0.8$-quantile
report, which coincides coordinatewise with the unconstrained regional Bayes
action for the loss with unit overage cost and underage cost four. Extending
either vector to all levels as $\bS\ba_o$ produces a coherent 85-vector.
For $\ba_o^{0.8}$, the upper-level entries are sums of regional $0.8$-quantile
reports rather than the corresponding upper-level marginal quantiles; their
distinction is another instance of the quantile aggregation gap examined above.

Let $d_{io}$ be region $i$'s seasonal-naive scale computed from the training
sample. For realised regional demand $\by_o$, the two declared evaluation
targets are
\begin{align}
S_o(\ba)&=\frac1{76}\sum_{i=1}^{76}
  \left(\frac{a_i-y_{io}}{d_{io}}\right)^2,\label{eq:e1square}\\
D_o(\ba)&=\frac1{76}\sum_{i=1}^{76}
  \frac{(a_i-y_{io})_++4(y_{io}-a_i)_+}{d_{io}}.\label{eq:e1decision}
\end{align}
The scaling prevents large regions from dominating either criterion. The
probability level $0.8$ is fixed by the $4{:}1$ ratio rather than selected
from the evaluation sample. Under the declared Gaussian distribution,
$\ba_o^{\mathrm{mean}}$ minimises expected squared error in
\eqref{eq:e1square}, whereas $\ba_o^{0.8}$ minimises the expected $4{:}1$
loss in \eqref{eq:e1decision}. A paired circular-block bootstrap with
four-quarter blocks describes variation across forecast origins
\citep{politisromano1992}.

\begin{table}[!ht]
\centering
\footnotesize
\caption{Target-specific evaluation over 28 rolling one-quarter-ahead forecasts.
Lower values are
better. Brackets contain the $95\%$ paired circular-block bootstrap interval
for the sample-mean action-minus-mean difference; the interval is
descriptive rather than a decision rule.}
\label{tab:e1}
\begin{tabular}{lrrr}
\toprule
Target & Mean report & $0.8$ action & Action $-$ mean \\
\midrule
Scaled squared error & 1.3118 & 1.4906 & 0.1788 $[-0.0016,\,0.3476]$ \\
Scaled $4{:}1$ loss & 2.6148 & 1.6679 & $-0.9468$ $[-1.1048,\,-0.7749]$ \\
\bottomrule
\end{tabular}
\end{table}

The sample-average ordering reverses. Figure~\ref{fig:e1paths} shows that the
mean has lower squared error at 19 of 28 origins, whereas the $0.8$ action
has lower $4{:}1$ loss at all 28. The squared-error interval in
Table~\ref{tab:e1} slightly overlaps zero, so the exercise does not provide
clear evidence that the average action-minus-mean loss difference is positive.
The ranking reversal is therefore descriptive.
Because the predictive distribution, reconciliation map and realised data
are identical across columns, changing the declared target changes the
comparison without changing forecast estimation. The columns are therefore
outputs for different tasks rather than competing estimates of a single
target.

\begin{figure}[!ht]
\centering
\includegraphics[width=\textwidth]{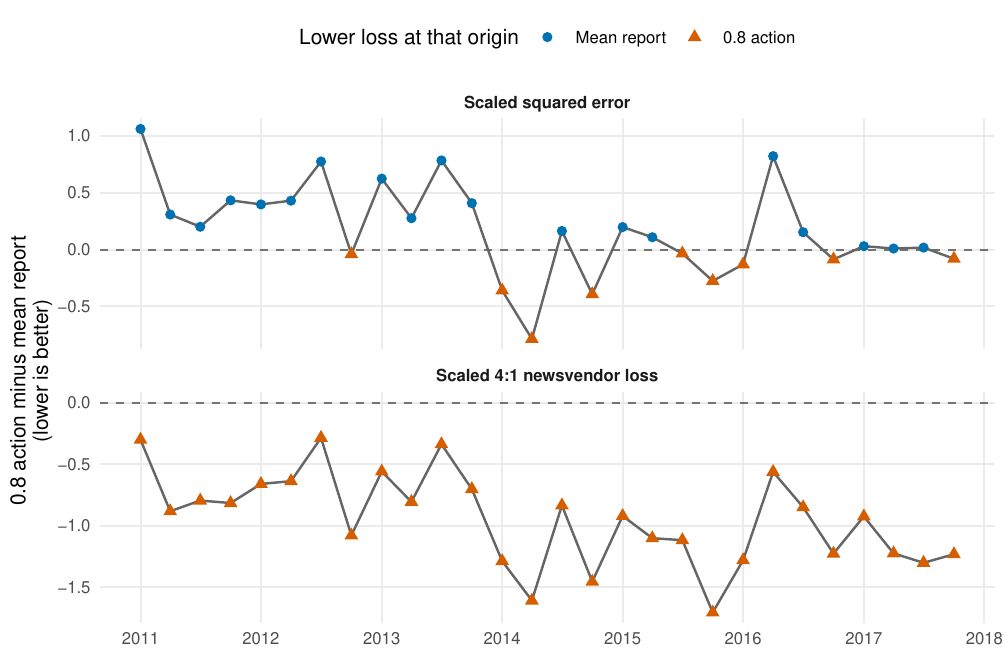}
\caption{Action-minus-mean loss differences at each forecast origin.
Positive values favour the coherent mean report and negative values favour
the coherent $0.8$ action; point symbols identify the lower-loss output at
that origin. The upper panel allows origin-specific exceptions but has a
positive mean, while every difference in the lower panel is negative.}
\label{fig:e1paths}
\end{figure}

Together, the two comparisons establish distinct empirical messages. Only the
quantile comparison is an order-effect experiment; the target comparison
illustrates a target-dependent ranking between reported outputs constructed for
different tasks.

\FloatBarrier

\section{Discussion and conclusion}\label{sec:conclusion}

Coherence answers whether an output respects an aggregation relation. It does
not determine where the relation originates, which output must satisfy it or
how a coherent candidate is selected. The affine--Gaussian--quadratic benchmark
hides these distinctions at the point-summary level: several standard procedures
select the same point and therefore appear interchangeable. Outside that
setting---and for nonmedian marginal quantiles even within it---the stage at
which a constraint is imposed, the summary reported and the decision target can
change the answer.

Proposition~\ref{prop:tube} treats smooth regular level sets and one
family of shrinking-tolerance limits. Theorem~\ref{thm:commute} treats
deterministic affine maps and marginal quantiles; corresponding commutation
results for learned or nonlinear maps, expectiles and path summaries remain
open. Proposition~\ref{prop:como} concerns every probability level; finite
reporting grids are weaker. The formal results are population-level, while
soft, noisy or learned aggregation relations require a model for constraint
uncertainty.

The tourism quantile experiment establishes existence, material magnitude and
reproducibility of the report-level gap in one hierarchy. The target-specific
comparison illustrates one ranking reversal between reported outputs. Neither
part estimates the prevalence of quantile aggregation gaps or target-dependent
rankings across data sets. The base-forecast and $\bW$ estimation protocols are
held fixed, and sensitivity to estimation error remains outside the experiment.

These results also delimit valid method comparisons. Existing guidance on
hierarchical evaluation emphasises scale, repeated evaluation windows and
multiple decision objectives \citep{athanasopouloskourentzes2023}. The
comparison contract in \eqref{eq:comparison-contract} adds the identity of the
constraint source and constrained output. When
operational risk is the target, the induced distribution of $L(\bY,\ba)$ may
matter beyond its expectation \citep{west2024}; this changes how a specified
output is judged without changing the reconciliation specification.

Two procedures are
directly comparable only when their constraint source, constrained output and
evaluation target coincide; the procedures can then be evaluated as alternative
solutions to the same task. If the source or output differs, a
common downstream score measures the consequence of substituting one output
for another rather than ranking solutions to a single reconciliation problem.
An empirical report should therefore state these three elements and the
selection criterion, together with the representation used for a nonlinear
constraint or the probability levels at which quantile coherence is imposed.

Aggregation coherence becomes a meaningful comparison criterion only relative
to a specified output; predictive distributions, reported forecasts and
constrained actions remain distinct forecasting objects.

\section*{Acknowledgements}
We thank Prof. Rob J. Hyndman for his helpful comments and suggestions on an earlier version of this paper.

\begin{samepage}
\section*{Data availability statement}
The Australian tourism data analysed in this study are publicly available as
the \texttt{tourism} data set in the R package \texttt{tsibble}. R code for
reproducing the analytical examples and tourism analysis, together with derived
results, figures and package-version information, is available at
\url{https://github.com/Eurekacoding/coherence-is-not-enough}.

\end{samepage}

\section*{Conflict of interest}
The authors declare no conflicts of interest.

\begin{Appendix}\label{app:proofs}
\subsection*{Proofs}

\subsection*{Proof of Proposition~\ref{prop:tube}}
For part (i), the co-area formula gives
\[
\int_{\{\lVert h(\by)\rVert\leq\epsilon\}}f(\by)p_0(\by)\,d\by
=\int_{\lVert z\rVert\leq\epsilon}
  \int_{h^{-1}(z)}\frac{f(\by)p_0(\by)}{J_h(\by)}
  \,d\Haus^{n-k}(\by)\,dz .
\]
The continuity and domination assumptions make the inner integral equal
to its value at zero plus $o(1)$ after averaging over the shrinking ball.
Applying the same argument with $f=1$ and taking the ratio yields
\eqref{eq:tube}; see also \citet{diaconis2013}. For part (ii), let $r$ be
the target density ratio and extend $r^{-1/k}$ to a positive smooth function
$\phi$ on a tubular neighbourhood and then to $\R^n$. The function
$\widetilde h=\phi h$ has the same zero set, and
$D\widetilde h=\phi\,Dh$ on $\cM$, so
$J_{\widetilde h}=\phi^{k}J_h$ and the density on $\cM$ is multiplied by
$r$. Compactness gives normalisability. More generally,
$h\mapsto\Gfield h$ with $\Gfield$ smooth and invertible on $\cM$ produces
the factor $|\det\Gfield|^{-1}$. \hfill$\square$

\subsection*{Proof of Proposition~\ref{prop:curv}}
Taylor expansion gives
$g(\bu)=\tfrac12\bu\T\mathrm{II}_{\by_0}\bu+o(\lVert\bu\rVert^2)$;
taking expectations under the assumed domination,
$\E g(\bU^\epsilon)=\tfrac12\tr[\mathrm{II}_{\by_0}
\E(\bU^\epsilon\bU^{\epsilon\top})]
+o[\E\lVert\bU^\epsilon\rVert^2]$. Tangential centring makes the second
moment $\bSigma_{T,\epsilon}$. The tangential component of
$\E(\bY^\epsilon)-\by_0$ vanishes, so projecting onto $\bnu$ gives
\eqref{eq:curv}; diagonalising $\mathrm{II}_{\by_0}$ gives the principal form.
\hfill$\square$

\subsection*{Proof of Theorem~\ref{thm:commute}}
Let $z_\alpha$ be the standard normal $\alpha$-quantile, nonzero by
assumption. For a Gaussian distribution with mean $\bmu$ and covariance
$\diag(\sigma_1^2,\dots,\sigma_n^2)$, the $i$th coordinates of the two
sides of \eqref{eq:commute} are
\[
(\bM\bmu+\bd)_i+z_\alpha\Bigl(\sum_j \bM_{ij}^2\sigma_j^2\Bigr)^{1/2}
\quad\text{and}\quad
(\bM\bmu+\bd)_i+z_\alpha\sum_j \bM_{ij}\sigma_j .
\]
Equality for every positive $(\sigma_1,\dots,\sigma_n)\T$ gives
$\bigl(\sum_j\bM_{ij}^2\sigma_j^2\bigr)^{1/2}
=\sum_j\bM_{ij}\sigma_j$.
Squaring, the polynomial
$2\sum_{j<k}\bM_{ij}\bM_{ik}\sigma_j\sigma_k$ vanishes
on the positive orthant, so every coefficient is zero and at most one
entry of the row is nonzero; the unsquared equality makes it nonnegative.
Conversely a zero row returns the constant $d_i$, and a row $c_ie_j\T$
with $c_i>0$ uses $Q_{c_iY_j+d_i}(\alpha)=c_iQ_{Y_j}(\alpha)+d_i$, valid
for arbitrary marginals. If $\bM^2=\bM$ and row $i$ is $c_ie_{\pi(i)}\T$, then
row $i$ of $\bM^2$ is $c_i$ times row $\pi(i)$ of $\bM$, forcing row $\pi(i)$
to equal $e_{\pi(i)}\T$. \hfill$\square$

\subsection*{Proof of Proposition~\ref{prop:gap}}
For (i), the marginal standard-deviation vector of $T(\bY)$ is
$\sqrt{\diag(\bM\bSigma\bM\T)}$, so the Gaussian quantile formula gives
\eqref{eq:gap} after the common location $\bM\bmu+\bd$ is cancelled. Under the
conditions of (ii), put $X_j=Y_j-\E Y_j$. Minkowski's inequality in
$L^2$ gives
\[
\operatorname{sd}\!\left(\sum_j\bM_{ij}Y_j\right)
=\left\lVert\sum_j\bM_{ij}X_j\right\rVert_2
\leq\sum_j\bM_{ij}\lVert X_j\rVert_2
=\sum_j\bM_{ij}\sigma_j.
\]
Equality in the triangle inequality holds exactly when the nonzero
$\bM_{ij}X_j$ are nonnegative scalar multiples of a common element of
$L^2$.
For a Gaussian vector this is equivalent to perfect positive correlation
among the coordinates receiving positive weight; positive definiteness of
$\bSigma$ therefore leaves at most one such coordinate. For (iii), a comonotonic vector satisfies
$\bY=\bigl(Q_{Y_1}(U),\dots,Q_{Y_n}(U)\bigr)$ for a common uniform $U$.
Each output coordinate $\sum_j\bM_{ij}Y_j+d_i$ is a nondecreasing function
of $U$, whose quantile is obtained by evaluation at $\alpha$. This yields
$\sum_j\bM_{ij}Q_{Y_j}(\alpha)+d_i$. \hfill$\square$

\subsection*{Proof of Corollary~\ref{thm:bregman}}
The integrability assumptions justify taking expectations after expanding
$D_\varphi+h$, which gives the displayed risk. Its action-dependent part has gradient
$-\nabla^2\varphi(\ba)(\bmu-\ba)$;
because $\mathcal D$ is open and $\bA$ has full row rank, the Lagrange multiplier
condition at any constrained minimiser yields
$\nabla^2\varphi(\ba^{\star})(\bmu-\ba^{\star})
=\bA\T\blambda$ with $\bA\ba^{\star}=\bb$. For
$\varphi=\tfrac12\by\T\bW^{-1}\by$ this is
$\ba=\bmu-\bW\bA\T\blambda$, and $\bA\ba=\bb$
gives $\blambda=(\bA\bW\bA\T)^{-1}(\bA\bmu-\bb)$, so
$\ba=T_{\bW}(\bmu)$. Strict convexity of the resulting quadratic risk on
the affine feasible set gives uniqueness.
\hfill$\square$

\subsection*{Proof of Proposition~\ref{prop:nogo}}
A mean-based rule takes the common value $\Psi(\bmu)$ under both counterexamples
while the unique actions differ, so it fails for at least one.
For the two gamma witnesses reported after the proposition, substituting the
corresponding distribution functions into the first-order equation
$R_1'(a_1)=R_2'(85-a_1)$ gives the two reported actions; strict
monotonicity of the distribution functions gives uniqueness.
\hfill$\square$

\subsection*{Proof of Proposition~\ref{prop:gauss}}
Set $\bK=\bA\bSigma\bA\T$. The Gaussian conditional has mean
$\bmu+\bSigma\bA\T\bK^{-1}(\bb-\bA\bmu)=T(\bmu)$ and covariance
$\bSigma-\bSigma\bA\T\bK^{-1}\bA\bSigma$; the affine image
$T(\bY)$ is Gaussian with
exactly these moments, proving (i) and (ii). Expected quadratic loss
differs from $(\ba-\bmu)\T\bSigma^{-1}(\ba-\bmu)$ by a constant, whose constrained
minimiser is $T(\bmu)$, proving (iii); this is
Corollary~\ref{thm:bregman} with $\bW=\bSigma$. For (iv), multiplying the
density of $P$ by $\exp(\blambda\T\bA\by)$ gives a Gaussian distribution
with covariance $\bSigma$ and mean
$\bmu+\bSigma\bA\T\blambda$; the constraint sets
$\blambda=\bK^{-1}(\bb-\bA\bmu)$, and
optimality follows from standard $\KL$-projection theory \citep{csiszar1975}.
\hfill$\square$

\end{Appendix}

\begin{Appendix}\label{app:gamma}
\subsection*{Two-store gamma illustration}

For $Y\sim\mathrm{Gamma}(k,\theta)$ with $\mu=k\theta$, and $F_k,F_{k+1}$
gamma distribution functions with common scale $\theta$,
\[
\begin{aligned}
\E\bigl((a-Y)_{+}\bigr)&=aF_k(a)-\mu F_{k+1}(a),\\
\E\bigl((Y-a)_{+}\bigr)&=\mu\{1-F_{k+1}(a)\}
  -a\{1-F_k(a)\},
\end{aligned}
\]
which give the inventory-loss column of Table~\ref{tab:six} without simulation.

\emph{Information update.} The density of $Y_1$ given $Y_1+Y_2=\tgt$ is
proportional to $f_1(x)f_2(\tgt-x)$ on $(0,\tgt)$; quadrature gives mean
$27.5146$. The normalising density is
$f_{Y_1+Y_2}(85)=0.0179377$, giving the total log score
$-\log f_{Y_1+Y_2}(85)=4.02085$ reported in Table~\ref{tab:six}.

\emph{Distribution revision.} With rates $\beta_i=1/\theta_i$, multiplying
the density by $\exp\{\lambda(Y_1+Y_2)\}$ changes the rates to
$\beta_i-\lambda$, and the
unique root of $\sum_i k_i/(\beta_i-\lambda)=\tgt$ is
$\lambda=-0.03939$, giving $(28.6964,56.3036)\T$. With
$\psi(\lambda)=\sum_i k_i\log\{\beta_i/(\beta_i-\lambda)\}$, the attained
divergence is $\lambda\tgt-\psi(\lambda)=0.26933$.

\emph{Dependence bounds.} For fixed marginals, the sharp lower and upper
distribution-function bounds for the sum are
\[
\underline F_{+}(s)=\sup_x\{F_1(x)+F_2(s-x)-1\}_{+},\qquad
\overline F_{+}(s)=\inf_x\{F_1(x)+F_2(s-x)\}\wedge1.
\]
Numerical inversion at probability level $0.95$ gives the interval
$[114.0128,166.4133]$ reported in Section~\ref{sec:como}
\citep{embrechts2013}.

\emph{Reporting.} The weighted projection is
$\bmu-\bW\bone(\bone\T\bW\bone)^{-1}(\bone\T\bmu-\tgt)$ with
$\bone$ the two-vector of ones; $\bW=I$ and
$\bW=\diag(400,100)$ give the
two reporting rows.

\emph{Decision.} With $R_i'(a)=(c^o_i+c^u_i)F_i(a)-c^u_i$, substituting
$a_2=\tgt-a_1$ and solving $R_1'(a_1)=R_2'(a_2)$ gives
$(51.1140,33.8860)\T$. Reversing the shapes to $(36,4)$ while holding the
means at $(40,60)\T$ gives $(45.9117,39.0883)\T$, the equal-mean counterexample in
Proposition~\ref{prop:nogo}.

\end{Appendix}

\begin{Appendix}\label{app:revenue}
\subsection*{Lognormal revenue illustration}

Let $Y_1$ be price and $Y_2$ volume, both positive, and consider
conditioning on an observed revenue $Y_1Y_2=c$. Two natural functions for expressing the constraint are
$h_1(y)=y_1y_2-c$ and $h_2(y)=y_1-c/y_2$. They define the same curve, but
$\lVert\nabla h_1\rVert=(y_1^2+y_2^2)^{1/2}$ and
$\lVert\nabla h_2\rVert=(y_1^2+y_2^2)^{1/2}/y_2$, so
Proposition~\ref{prop:tube} gives
\begin{equation}\label{eq:tilt}
p_0^{h_2}(y)\ \propto\ y_2\,p_0^{h_1}(y),\qquad y_1y_2=c .
\end{equation}
The second protocol assigns more mass to high-volume, low-price points on
the same curve.

Let $X_i=\log Y_i$, $i=1,2$, be independent, with mean vector
$(\log5,\log20)\T$ and variance vector $(1,1.44)\T$, and set $c=100$.
Parameterising the curve by
$x_2=\log y_2$, the surface element and co-area Jacobian give an $X_2$
density proportional to $f_{X_1}(\log c-x_2)f_{X_2}(x_2)$ under $h_1$.
This product of normal densities yields $X_2\sim N(\log20,v)$, where
$v=\sigma_1^2\sigma_2^2/(\sigma_1^2+\sigma_2^2)=1.44/2.44=0.5902$.
Multiplication by $y_2=e^{X_2}$ in \eqref{eq:tilt} shifts the normal mean,
so under $h_2$ it becomes
$\log20+v$ with variance unchanged. Hence
\[
\E_{p_0^{h_1}}(Y)=(6.72,\,26.86)\T,\qquad
\E_{p_0^{h_2}}(Y)=(3.72,\,48.47)\T,
\]
and every volume quantile under the second protocol is
$e^{v}=1.804$ times its counterpart under the first. Neither protocol is
selected by the revenue curve.

Both mean pairs have product $100e^{v}=180.43$. Indeed,
$\E Y_1\,\E Y_2=c\,e^{v}$ under this exponential reweighting of the
conditional distribution of $X_2$, because $Y_1=ce^{-X_2}$ and the
reweighting shifts the mean of $X_2$ without changing $v$. The mean
therefore leaves the constraint set
by the same factor under both protocols, illustrating the
curvature effect quantified in Proposition~\ref{prop:curv}. Here
the departure is $80\%$ of $c$, far outside the small-variance regime
in which \eqref{eq:curv} is a good approximation.

For a small-variance check, scale both log variances by $t\downarrow0$ and
set $\by_0=(5,20)\T$, $\bnu=(20,5)\T/\sqrt{425}$. The exact displacement
in the normal direction is
\[
\langle \E(\bY^{t})-\by_0,\bnu\rangle
=\frac{200}{\sqrt{425}}\{e^{tv/2}-1\}
=\frac{100v}{\sqrt{425}}t+O(t^2).
\]
Here the signed curvature is $200/(425\sqrt{425})$ and the leading
tangential variance is $425vt$, so \eqref{eq:curv} gives the same leading
term.

\end{Appendix}

\bibliographystyle{plainnat}
\bibliography{references}

\end{document}